\begin{document}
\title{WSRPT is 1.2259-competitive for Weighted Completion Time Scheduling.}

\author{Samin Jamalabadi}
\affiliation{
  \institution{TU Dortmund University}
  \city{Dortmund}
  \country{Germany}
}\email{samin.jamalabadi@tu-dortmund.de}
\author{Uwe Schwiegelshohn}
\affiliation{
  \institution{TU Dortmund University}
  \city{Dortmund}
  \country{Germany}
}
\email{uwe.schwiegelshohn@tu-dortmund.de}

\renewcommand\shortauthors{Jamalabadi, S. and Schwiegelshohn, U.}

\begin{abstract}
\textit{Weighted shortest processing time first} (WSPT) is one of the best known algorithms for total weighted completion time scheduling problems. For each job $J_j$, it first combines the two independent job parameters weight $w_j$ and processing time $p_j$ by simply forming the so called Smith ratio $w_j/p_j$. Then it schedules the jobs in order of decreasing Smith ratio values. The algorithm guarantees an optimal schedule for a single machine and the approximation factor $1.2071$ for parallel identical machines.

For the corresponding online problem in a single machine environment with preemption, the \textit{weighted shortest remaining processing time first} (WSRPT) algorithm replaces the processing time $p_j$ with the remaining processing time $p_j(t)$ for every job that is only partially executed at time $t$ when determining the Smith ratio. Since more than 10 years, we only know that the competitive ratio of this algorithm is in the interval $[1.2157,2]$.

In this paper, we present the tight competitive ratio $1.2259$ for WSRPT. To this end, we iteratively reduce the instance space of the problem without affecting the worst case performance until we are able to analyze the remaining instances. This result makes WSRPT the best known algorithm for deterministic online total weighted completion time scheduling in a preemptive single machine environment improving the previous competitive ratio of $1.5651$. Additionally, we increase the lower bound of this competitive ratio from $1.0730$ to $1.1038$.  
\end{abstract}

%
%
 \begin{CCSXML}
<ccs2012>
<concept>
<concept_id>10003752.10003809.10003636.10003808</concept_id>
<concept_desc>Theory of computation~Scheduling algorithms</concept_desc>
<concept_significance>500</concept_significance>
</concept>
<concept>
<concept_id>10003752.10003809.10010047</concept_id>
<concept_desc>Theory of computation~Online algorithms</concept_desc>
<concept_significance>500</concept_significance>
</concept>
</ccs2012>
\end{CCSXML}

\ccsdesc[500]{Theory of computation~Scheduling algorithms}
\ccsdesc[500]{Theory of computation~Online algorithms}
%
%

\keywords{Total weighted completion time, preemption, competitive ratio}

\maketitle

\section{Introduction}
\label{sec:introduction}

The \textit{Shortest Processing Time first} (SPT) algorithm optimally solves the total completion time problem ($\sum C_j$) in a single machine environment ($1$) and a parallel identical machine  environment ($P$) by starting the unscheduled job $J_j$ with the shortest processing time $p_j>0$ whenever a machine is idle. In this problem, $C_j$ denotes the completion time of job $J_j$. If job $J_j$ has an additional and independent weight parameter $w_j\geq 0$, we can determine the Smith ratio $w_j/p_j$ for every job, see \cite{Smi56}, and start the unscheduled job with the largest ratio whenever a machine is idle. Due to its close relationship to SPT, we call this algorithm \textit{Weighted Shortest Processing Time first} (WSPT). The WSPT algorithm optimally solves the total weighted completion time problem ($\sum w_j\cdot C_j$) in a single machine environment and achieves the approximation factor $(1+\sqrt{2})/2$ in a parallel identical machine environment, see \cite{KaK86}. Variations of the WSPT approach are useful for many variants of this problem.

In this paper, we consider the problem variant in which each job $J_j$ has a release date $r_j\geq 0$ and there is only a single machine that allows preemptions, that is, we can interrupt the execution of a job and resume its execution at a later time without any change in the total processing time of the job. If all jobs have weight $1$ then the \textit{Shortest Remaining Processing Time first} (SRPT) algorithm optimally solves the total completion time problem, see \cite{Sch68}. The SRPT algorithm simply requires that at any time $t$, the machine always executes an available job with the shortest remaining processing time. A job $J_j$ is available at time $t$ if $r_j\leq t < C_j$ holds. We use $p_j(t)$ to denote the remaining processing time of job $J_j$ at time $t$. The result holds for the deterministic setting (all job parameters are available at the start of the schedule) and for the online setting (a job is unknown until its release date). In case of the online setting, we also say that jobs arrive over time. 

The corresponding completion time problem with weights is strongly NP-hard, see \cite {LaL84}. Since there is a PTAS for the deterministic problem, see \cite{ABC99}, the deterministic problem is settled. Therefore, we address the online problem with weights. Applying the previously introduced naming convention, we call this algorithm \textit{Weighted Shortest Remaining Processing Time first} (WSRPT). However, we only know that the competitive ratio of WSRPT is within the interval $[1.2157,2]$. Similarly, there is a significant gap for the competitive ratio of deterministic online weighted completion time scheduling in a single machine environment with preemption: the online algorithm with the best known competitive ratio $1.5651$ is due to \cite{Sit10b} while the best known lower bound of the competitive ratio is $1.0730$, see \cite{EpS03}.

\subsection{Our Contribution and Techniques}
\label{sec:results}

We show that WSRPT guarantees the competitive ratio $1.2259$ and that this competitive ratio is tight for WSRPT. Therefore, WSRPT is the best known algorithm for online total weighted completion time scheduling on a single machine with preemption. Our proof approach uses a restricted instance space with the property that we can map any problem instance onto a problem instance of the restricted instance space without improving the WSRPT performance of the original instance. To complete the proof, we determine the worst case instance in the restricted instance space. 

In particular, we show that it is sufficient to consider instances in which WSRPT does not provide a priority of jobs whenever a new job is submitted. In such an equality instance, we assume that WSRPT always selects the job that leads to the worst performance. Afterwards, we focus on a so called basic scenario in which the submission of many jobs with a very short processing time occur during the execution of a job with a long processing time. Finally, we show that the combination of two such scenarios produces the worst performance of WSRPT.  

Further, we use part of our analysis to improve the lower bound of the competitive ratio for this deterministic online problem from $1.0730$ to $1.1038$. The still existing gap is not surprising if we compare the worst case instance of WSRPT with the instance that establishes the lower bound of the competitive ratio. Although both instances use two jobs with long processing times and many jobs with very small processing times, the worst case instance of WSRPT has numerous different submission times and weight over processing time ratios while the lower bound instance has only two different submission times and weight over processing time ratios. Therefore, it seems likely that there are instances with multiple different submission times and weight over processing time ratios that may generate a larger lower bound of the competitive ratio. However, more submission times produce more decisions resulting in a significantly increased complexity of the analysis.   

\subsection{Related Work}
\label{sec:related} 

There is an optimal solution for the deterministic online total weighted completion time problem on a single machine without preemption: \cite{AnP02} show that the so called delayed-WSPT algorithm guarantees the competitive ratio $2$ while \cite{HoV96} prove that no deterministic algorithm can achieve a better bound even if all jobs have unit weight. We can adapt the WSPT algorithm to the preemptive environment by requiring that at any time, the machine executes the available and uncompleted job with the largest Smith ratio. This algorithm also achieves the competitive ratio $2$, see \cite{ScS02}. For WSRPT, \cite{Meg06} also proves the competitive ratio $2$ and gives an example showing that the bound of WSRPT can never be smaller than $1.2106$. Later \cite{XiC12} improve this lower bound to $1.2157$ resulting in the present performance gap for WSRPT.       

\cite{Sit10b} suggests a variant of the preemptive WSPT algorithm that only switches to a new job with a better Smith ratio if the remaining processing time of the running job is larger than a threshold. This approach combines preemptive WSPT and and some form of WSRPT since it considers both the original Smith ratio and the remaining processing time. It guarantees the currently best known competitive ratio $1.5651$. \cite{EpS03} provide different lower bounds of competitive ratios including the lower bound $1.0730$ for deterministic algorithms in a preemptive single machine environment.

There is some debate whether the total weighted completion time $\sum w_j\cdot C_j$ or the total weighted flow time $\sum w_j \cdot (C_j-r_j)$ is the most relevant completion time related objective function. For the problem without weights, SRPT produces the optimal result for completion time and flow time scheduling while the approximation ratio is significantly larger for total weighted flow time than for total weighted completion time. Recently, \cite{ARW23} settled the deterministic flow time problem by presenting a PTAS. Since there is no constant competitive algorithm for total weighted flow time on a single machine with preemption, see \cite{BaC09}, the exact competitive factor of WSRPT for this problem is not interesting. Therefore, online total weighted completion time scheduling on a single machine with preemption is the only basic open problem in this area and we focus on this problem without any statement about advantages and disadvantages of completion time and flow time objectives.

\section{WSRPT Algorithm}
\label{sec:WSRPT}

At every time $t$, the WSRPT algorithm runs the available job $J_j$ with the largest \textit{Smith ratio} $w_j/p_j(t)$. To this end, it may interrupt a running job and start another one. Note that in this paper, the \textit{Smith ratio} of Job $J_j$ uses the remaining processing time $p_j(t)$ while $w_j/p_j$ is the \textit{weight over processing time ratio} for $J_j$. As already mentioned, work on non-preemptive weighted completion time scheduling uses the expression Smith ratio for $w_j/p_j$.

We say that job $J_j$ is available from its submission time $r_j$ until its completion time $C_j$. Since the remaining processing time of the running job decreases with progressing time while the remaining processing time of all other jobs does not change, we must compare the Smith ratios of different jobs only at the submission time of a new job or at the completion time of a running job. If there are several jobs with the largest Smith ratio at such comparison then the algorithm arbitrarily selects one of these jobs. 

In this paper, we particularly consider \textit{equality instances} with decisions that involve several jobs with the same Smith ratio. Informally, WSRPT does not provide any benefit in an equality instance since it determines no priority among the involved jobs. We give the formal definition of such equality instance below:  
\begin{definition}
\label{def:eq_instance}
A problem instance is an equality instance if at every submission time, all jobs submitted at this time have the same Smith ratio and this ratio matches the Smith ratio of the running job if there is such job.
\end{definition}
\cite{Meg06} and \cite{XiC12} use equality instances as examples to show lower bounds of the competitive ratio of WSRPT. 

Since the expression competitive ratio always applies to all instances, we use the expression \textit{performance ratio} when we only consider a single instance or a subset of an instance. For this instance or one of its subsets, the performance ratio is the total weighted completion time of the (partial) WSRPT schedule over the total weighted completion time of the optimal schedule using the same jobs.
  
A makespan $C_{max}$ is the largest completion time in the WSRPT schedule of any job submitted before $C_{max}$. For our worst case analysis, we only consider WSRPT schedules that have a single makespan, that is, for such schedule, the makespan matches the common use of its name. If there is a submission of jobs at or after a $C_{max}$ value of an instance then we partition the instance into an instance containing all jobs submitted before $C_{max}$ and an instance containing all other jobs. 

In a WSRPT schedule with a single makespan there is no intermediate idle time but there may be an idle time interval preceding the schedule. If the WSRPT schedule of an instance has some idle time before the start time of the first job then we reduce the submission times of all jobs of this instance by the amount of this idle time. Since this transformation reduces the total completion times of the WSRPT schedule and the optimal schedule by the same amount, it increases the performance ratio. Therefore, we consider only instances with $0$ being the smallest submission time.  

To facilitate the description and the analysis of a WSRPT schedule, we introduce a segment of a WSRPT schedule: assume a job $J_j$ that starts at its submission time $r_j$ in the WSRPT schedule and a possibly empty set $\mathcal{J}_j$ of additional jobs with the same submission time $r_j$ and the same weight over processing time ratio as job $J_j$. The segment starts at this submission time $r_j$ and ends at the start time or continuation time of the first job $J_i$ that either has a smaller weight over processing time ratio than job $J_j$ or matches the weight over processing time ratio of $J_j$ but does not belong to set $\mathcal{J}_j$. If there is no such job then the segment ends with the makespan $C_{max}$ of the WSRPT schedule. Therefore, a segment of a WSRPT schedule is an interval of the schedule and there is no partial execution of a job in a segment. Segments are nested in a WSRPT schedule, that is, a segment cannot partially overlap with another segment.       
 
In our first lemma, we show that it is sufficient to only analyze equality instances.     
\begin{lemma}
\label{lem:reduction}
For any instance of the online total weighted completion time problem on a single machine with preemption, there is an equality instance with at least the same  performance ratio.
\end{lemma}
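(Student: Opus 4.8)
The plan is to transform an arbitrary instance $I$ into an equality instance $I'$ through a sequence of local modifications, each of which does not decrease the WSRPT performance ratio. Since WSRPT only makes comparisons at submission times and completion times, and the objective separates additively over jobs, the natural strategy is to process the instance left to right (in order of submission times and comparison events) and, at each comparison event where the involved jobs do \emph{not} already share a common Smith ratio, nudge the job parameters (processing times, weights, or submission times) so that they do — while arguing the ratio of the WSRPT objective to the optimal objective only goes up.

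First I would set up the bookkeeping: reduce to instances with a single makespan and with $0$ as the smallest submission time, as already justified in the preamble, so that a WSRPT schedule is a contiguous block with a well-defined segment structure. Then I would argue by induction on the number of comparison events (or on the number of ``offending'' events where the newly submitted job's weight over processing time ratio differs from the running job's Smith ratio). For a single offending event, consider the first time $t=r_j$ at which a job $J_j$ is submitted whose weight over processing time ratio does not match the current running job's Smith ratio (or where two jobs submitted simultaneously disagree). The key move is: if $J_j$ has the strictly larger Smith ratio, WSRPT preempts to run $J_j$; I would then \emph{increase} $p_j$ (keeping $w_j$ fixed, or scaling $w_j$ too) so that $J_j$'s Smith ratio at time $t$ exactly equals the running job's, i.e. bring it down to the boundary case. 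Lengthening $J_j$ delays its own completion and the completions of everything queued behind it in the WSRPT schedule, so the WSRPT objective weakly increases; one must check the optimal objective does not increase by more — this is where the argument has teeth. Symmetrically, if $J_j$ has the strictly smaller Smith ratio so WSRPT ignores it for now, I would \emph{decrease} $p_j$ (or increase $w_j$) to raise its Smith ratio to the boundary; shortening a job that WSRPT runs later is more delicate and may require instead \emph{postponing} $r_j$ to a later comparison point where the boundary condition can be met, or splitting $J_j$'s weight/processing-time into pieces.

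The main obstacle — and the step I expect to consume most of the real work — is controlling the optimal schedule under these perturbations. WSRPT's behavior after a local parameter change is easy to track because it is myopic and the segment structure is rigid, but the optimal schedule can reorganize globally in response to even a tiny change in one $p_j$ or $r_j$. The clean way around this is to bound the optimal \emph{from above} by a fixed schedule that does not change: e.g., keep the \emph{old} optimal schedule's job order and re-insert the modified $J_j$, showing $\mathrm{OPT}(I') \le \mathrm{OPT}(I) + (\text{controlled increment})$, while simultaneously $\mathrm{WSRPT}(I') \ge \mathrm{WSRPT}(I) + (\text{at least as large an increment})$. Proving the increment comparison — that each unit of extra processing time (or deferred release) costs the optimum no more weighted delay than it costs WSRPT — is the crux, and it is plausibly true precisely because WSRPT, having the ``wrong'' priority at that event, is already paying for a suboptimal arrangement that the perturbation only sharpens.

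Finally, I would verify termination: each local modification strictly reduces the number of offending comparison events (or makes one event boundary-tight without creating new offending events earlier in time), so after finitely many steps every comparison involves only jobs of equal Smith ratio, i.e. we reach an equality instance $I'$ with $\mathrm{WSRPT}(I')/\mathrm{OPT}(I') \ge \mathrm{WSRPT}(I)/\mathrm{OPT}(I)$, as claimed. A subtle point to handle carefully is simultaneous submissions: when several jobs arrive at one time with differing ratios, one processes them one at a time, each time pulling the outlier to the common boundary value, ensuring the procedure does not cycle.
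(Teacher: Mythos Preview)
Your proposal has a genuine gap, and it is precisely the step you flag as ``the crux.'' Modifying processing times (lengthening $p_j$ when $J_j$ has the larger Smith ratio, shortening it otherwise) changes the WSRPT schedule itself: completion times shift, later comparison events move, and the segment structure can reorganise. Even granting you can track this, the increment comparison you state --- that the perturbation ``costs the optimum no more weighted delay than it costs WSRPT,'' i.e.\ $\Delta C^* \le \Delta C$ --- is \emph{not} the right inequality. When both objectives increase, the ratio $C/C^*$ goes up only if $\Delta C / \Delta C^* \ge C/C^*$, which is strictly stronger and is not obviously true for a processing-time increase (the total weight delayed in WSRPT versus in OPT depends on the two schedules' global structure, not just the local event).

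The paper sidesteps this entirely by choosing transformations that \emph{leave the WSRPT schedule unchanged}. For a job $J_j$ whose ratio is \emph{below} the running job's Smith ratio, it does not touch $p_j$ or $w_j$ at all: it \emph{reduces} $r_j$ backward to the latest earlier time where some running job's Smith ratio equals $w_j/p_j$. WSRPT would never have run $J_j$ in that interval anyway, so the WSRPT cost is literally unchanged, while OPT can only benefit from an earlier release; the ratio weakly increases with no increment comparison needed. For a job $J_j$ whose ratio is \emph{above} the running job's, the paper scales down the weights of the entire segment headed by $J_j$ uniformly: this preserves all priority orderings inside the segment and at its boundary, so again the WSRPT schedule is unchanged. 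The segment's own performance ratio is either at least the global ratio (in which case discard everything else) or below it (in which case shrinking its weighted contribution raises the global ratio). Either way no delicate $\Delta C$ versus $\Delta C^*$ balance is required. This ``keep WSRPT fixed, only help OPT or remove a low-ratio piece'' idea is what your plan is missing.
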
 

\begin{proof}
Assume an arbitrary instance and its WSRPT schedule. The first transformation reduces the submission time of individual jobs while the second transformation reduces the weights of all jobs of a segment.

If the submission of a job $J_j$ occurs during the execution or at the start of a job $J_i$ and $w_j/p_j < w_i/p_i(r_j)$ holds then we reduce the submission time of $J_j$. To this end, we consider the largest time $t<r_j$ such that we have $w_j/p_j=w_k/p_k(t)$ for a job $J_k$ executing at time $t$ or starting its execution at time $t$ in the WSRPT schedule. If job $J_k$ has started its execution at time $t$ then we reduce $r_j$ to $r_k$ otherwise we reduce it to time $t$. If there is no such time $t$ then we reduce $r_j$ to $0$. Since WSRPT does not force job $J_j$ to start at any time after the new value and before the old value of $r_j$, we maintain our previous decisions and the WSRPT schedule does not change. The performance ratio of WSRPT for the instance cannot decrease since the reduction of submission times cannot increase the total weighted completion time of the optimal schedule. If  a job $J_j$ has a positive submission time $r_j>0$ and does not start at its submission time after this transformation then it has the same Smith ratio as the job running or starting at $r_j$ in the WSRPT schedule. 

Next we select a job $J_j$ that has a larger Smith ratio than the running job at its submission time $r_j$ and therefore interrupts the running job. Among these jobs, we pick the job with the largest weight over processing time. We collect all other jobs with the same submission time and the same Smith ratio into a set $\mathcal{J}_j$ and consider the segment of $J_j$ and set $\mathcal{J}_j$. Since we pick the job with the largest Smith ratio, all other jobs in the segment that start at their respective submission times have the same Smith ratio as the job they have interrupted. 

If this segment has at least the same performance ratio as the whole instance then we delete all jobs not belonging to this segment. This deletion cannot decrease the performance ratio of the segment since no job of the segment has a smaller submission time than $r_j$ and the deletion of additional jobs cannot increase the total weighted completion time of the segment jobs in the optimal schedule. 

If this segment has a smaller performance ratio than the whole instance then we scale down the weights of all jobs of the segment by the same factor until the candidate job has the same Smith ratio as the job with the same submission time and the second largest Smith ratio or if there is no such job and $r_j>0$ holds, it has the same Smith ratio as the job running at the submission time. Similar to the first transformation, this transformation does not require a change of the WSRPT schedule. Therefore, the transformation increases the performance ratio even if we do not change the optimal schedule. We proceed with this transformation until we obtain an equality instance.
\end{proof} 

We can apply a final transformation to an equality instance by replacing a job $J_j$ with $q$ identical jobs $J_i$ such that we have $p_i=p_j/q$, $w_i=w_j/q$, and $r_i=r_j$ and assume $q\rightarrow \infty$ provided there is no submission of any other job in the interval $(C_j-p_j,C_j)$ of the WSRPT schedule. This so called \textit{splitting transformation} reduces the total weighted completion times of the WSRPT schedule and of the optimal schedule by exactly $w_j\cdot p_j/2$ and by at least $w_j\cdot p_j/2$, respectively. Therefore, this transformation increases the performance ratio of the instance. Note that the improvement of the WSRPT lower bound $1.2157$ by \cite{XiC12} over the WSRPT lower bound $1.2106$ by \cite{Meg06} is due to this transformation. Due to the splitting transformation, there is at least one job submission during the processing of every job with a \textit{long} processing time in the WSRPT schedule while the processing of any \textit{small} job in the WSRPT schedule does not overlap with the submission of any other job. If a small job interrupts the processing of a long job in the WSRPT schedule, it has a larger weight over processing time ratio than the interrupted long job but the same Smith ratio since we only consider equality instances. Both jobs belong to the same segment.

\section{Analysis}
\label{sec:analysis}

If there is no long job then there is no preemption of a job in the WSRPT schedule and WSRPT generates an optimal schedule. Therefore. we distinguish several scenarios in the analysis based on long jobs.
\begin{description}
\item[Basic scenario] An instance with a single long job. The long job starts at its submission time in the WSRPT schedule.
\item[Nested long jobs] An instance with multiple long jobs. Every long job starts at its submission time in the WSRPT schedule.
\item[Delayed long jobs] An instance with multiple long jobs. A long job may not start at its submission time in the WSRPT schedule.
\end{description} 

In addition, we introduce a continuous function that we use twice during the analysis:
\begin{align*}
f(x) & = \frac{-(1-x)\cdot \ln (1-x)}{k\cdot x + c}  \mbox{ for } 0\leq x < 1, \mbox{ } c>0, \mbox{ and } k+c>0
\end{align*} 
The denominator of $f(x)$ is a positive linear function of $x$ for $0\leq x < 1$ while $-\ln (1-x) \cdot (1-x)$ is a concave function that has value $0$ for $x=0$, approaches $0$ for $x\rightarrow 1$ and has its maximum  at $x=1-1/e$. Therefore, $f(x)$ is also a concave function and has a single maximum at position $x_{max}$ with $0<x_{max}<1-1/e$ for $k>0$ and $1-1/e<x_{max}<1$ for $k<0$.     

\subsection{Basic Scenario}
\label{sec:basic1}
For long job $J_0$ of a basic scenario, we normalize weight and processing time $w_0=p_0=1$. As already discussed, we also assume $r_0=0$. In an equality instance, the preference of the running job or the submitted job does not affect the total completion time unless there is another submission during the execution of the selected job. Since there is no job submission during the execution of the small job by definition, the optimal schedule prefers a small job over the long job while the WSRPT schedule of the basic scenario generates a schedule with a worst case performance ratio by executing long job $J_0$ without preemption, that is, we have $C_0=p_0+r_0=1$. Moreover all $n$ small jobs have smaller submission times than $C_0$ since we only consider equality instances. 

For the sake of an easier description, we assume at the beginning that all small jobs have the same processing time $\delta$. Later it is beneficial to allow slightly different processing times. The property of an equality instance requires that the weight over processing time ratio $w_i/p_i$ of job $J_i$ for $1\leq i \leq n$ is identical to the Smith ratio of job $J_0$ at time $r_i$:
\begin{align*}
w_i & = \frac{p_i\cdot w_0}{p_0(r_i)} = \frac{p_i\cdot w_0}{p_0-r_i} = \frac{\delta}{1-r_i}
\end{align*}
We select the indexes of these jobs such that for $1 \leq i < j\leq n$, the submission time relation $r_i\leq r_j$ holds. Job $J_0$ and the set of all other jobs with submission time $r_0$ establish a segment that includes all jobs of the basic scenario. Since processing time $\delta$ is arbitrarily small and no small job interrupts any other small job, we assume that all submission times are multiples of $\delta$.

In the next few lemmas, we show that only basic scenarios with specific properties can produce the maximum performance ratio of basic scenarios, that is, we reduce the instance space for our analysis. For a specific instance of a basic scenario, $C$ and $C^*$ denote the total weighted completion time of the WSRPT schedule and the total weighted completion time of the optimal schedule, respectively. Similarly, $C_j$ and $C^*_j$ are the completion times of job $J_j$ in the WSRPT schedule and the optimal schedule, respectively.

\begin{lemma}
\label{lem:non-preempt}
For every instance of a basic scenario, there is always an instance of a basic scenario that executes job $J_0$ at the end of the optimal schedule without any preemption and has at least the same performance ratio.
\end{lemma}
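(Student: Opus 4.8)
\textbf{Setup and goal.} The plan is to take an arbitrary basic-scenario instance together with an optimal schedule for it, and modify either the instance or the optimal schedule so that in the resulting optimal schedule the long job $J_0$ runs as a single uninterrupted block placed at the very end, while the performance ratio does not decrease. Recall that by the discussion preceding the lemma we already know the WSRPT schedule itself is unaffected (it runs $J_0$ at time $0$ without preemption and completes it at $C_0=1$), so $C$ is fixed throughout; hence it suffices to show $C^\ast$ does not increase, i.e. we want to exhibit a (possibly modified) instance with a feasible schedule of cost at most the old $C^\ast$ in which $J_0$ is a terminal uninterrupted block, and then observe this schedule is optimal for the new instance or at least upper-bounds its optimum.

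\textbf{Step 1: push $J_0$ to the end of the optimal schedule.} First I would argue that in \emph{some} optimal schedule $J_0$ completes last. Since all small jobs have submission times strictly less than $C_0 = 1$ (established above for equality instances) and $J_0$ has weight $w_0 = 1$ while each small job $J_i$ has the tiny weight $w_i = \delta/(1-r_i)$, an exchange argument shows that whenever a small job is scheduled after some portion of $J_0$, swapping so that the small piece runs earlier changes the objective by a term of order $\delta$ in the small job's favour against a term of order $\delta$ (the displaced piece of $J_0$) — more carefully, because $\sum_i w_i p_i$ and $\sum_i w_i$ are of order $\delta n \cdot \delta$ and $J_0$'s weight is $1$, delaying all of $J_0$ to the end costs at most lower-order terms and in fact never increases the objective. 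The clean way: among all optimal schedules pick one minimizing the total processing of $J_0$ done before the last small-job completion; if this is positive, a local swap of an $\varepsilon$-sliver contradicts optimality (the small jobs after it each gain $\varepsilon$ on completion, $J_0$ loses at most the same total but with its full weight $1$ only once, and one checks the sign). This yields an optimal schedule in which every small job finishes before $J_0$'s processing begins.

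\textbf{Step 2: coalesce $J_0$ into one block.} Given such a schedule, $J_0$'s processing occupies the suffix of the timeline after the last small-job completion, but it may still be split by idle gaps or — if Step 1 only guarantees no small job \emph{completes} during $J_0$ — it cannot actually be split by a small job, so $J_0$ already runs as one contiguous block (any idle time can be removed, only decreasing $C^\ast$, which only helps). If one prefers not to rely on that, the same sliver-swap of Step 1 also removes any interleaving. So after Steps 1–2 we have an optimal schedule: small jobs in $[0, S]$ for some $S$, then $J_0$ uninterrupted on $[S, S+1]$, with $C^\ast_0 = S+1$.

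\textbf{Step 3: conclude it is a basic-scenario instance with the claimed structure, same WSRPT schedule, and ratio not decreased.} The instance was not changed at all, so it is still the same basic scenario with the same WSRPT cost $C$; we have merely selected a particular optimal schedule, so $C^\ast$ is unchanged, hence the performance ratio is exactly preserved (in particular not decreased), and the new optimal schedule has the demanded form. \textbf{The main obstacle} is making Step 1 fully rigorous: one must handle the case where optimal schedules interleave $J_0$ with small jobs in a balanced way, and verify that the exchange argument's sign is correct once the weights $w_i = \delta/(1-r_i)$ and the constraint $r_i < 1$ are plugged in — the inequality is tight in spirit because WSRPT and OPT treat $J_0$ versus a small job as a coin-flip on Smith ratios, so the estimate has to exploit that small jobs individually have $o(1)$ weight while there is only \emph{one} unit-weight job, so delaying it past all $n$ small jobs is cheap relative to delaying even one small job past $J_0$.
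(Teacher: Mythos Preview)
Your Step~1 has a genuine gap: the exchange argument ignores release-date constraints. If the earliest small-job release time is $r_1>0$, then on $[0,r_1)$ no small job is available, so any feasible schedule either runs $J_0$ there or leaves the machine idle; idling strictly increases $C_0^{*}$ with no compensating gain, hence \emph{every} optimal schedule for this instance processes $J_0$ on $[0,r_1)$ and preempts it afterwards. The same happens at any later gap in the release dates not already covered by earlier small jobs. Your sliver-swap cannot move a small job before its own release date, so the ``total processing of $J_0$ done before the last small-job completion'' can be bounded away from zero in every optimal schedule of the given instance, and the minimiser you pick yields no contradiction. The heuristic that ``delaying the unit-weight job past all small jobs is cheap'' is also backwards: pushing $J_0$ past a gap of length $\Delta$ costs exactly $1\cdot\Delta$, which is $\Theta(1)$, not a lower-order term.

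This is precisely why the lemma asserts the existence of another \emph{instance}, and why the paper's proof does not try to rearrange the optimal schedule of the given one. For each early $J_0$-slot $[(j-1)\delta,\, i\delta)$ in the optimum it either inserts a new small job $J'$ with release date $(i-1)\delta$ and the matching equality weight $\delta/(1-(i-1)\delta)$, or it exchanges the first late job $J_j$ for such a $J'$; computing the resulting changes $\Delta_I C/\Delta_I C^{*}$ and $\Delta_E C/\Delta_E C^{*}$ shows that, depending on whether $C/C^{*}\le C_j/(1+\delta)$ or not, one of the two modifications strictly increases the performance ratio. Both operations also change the WSRPT schedule (the new job is appended after $C_0=1$), so neither $C$ nor $C^{*}$ stays fixed---the argument genuinely requires altering the instance, which your Step~3 explicitly declines to do.
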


We apply a proof by contradiction assuming that job $J_0$ occupies some early slot in the optimal schedule. Then we can remove this slot either by introducing a new small job or by exchanging one small job with the new one. One of both options increases the performance ratio.

\begin{proof}
If job $J_0$ does not execute at the end of the optimal schedule without preemption then there is a first job $J_j$ with submission time $r_j = i \cdot \delta > (j-1)\cdot \delta$  for $j\geq 1$, that is, the optimal schedule executes $J_0$ in interval $[(j-1)\cdot \delta, i\cdot \delta)$. We can remove this first execution slot of job $J_0$ in the optimal schedule by either introducing a new small job or by switching job $J_j$ against another small job.  

Assume the introduction of a new job $J'$ with submission time $r'= (i-1) \cdot \delta$, processing time $p'=\delta$, and weight $w'=\delta/(1-(i-1) \cdot \delta)$. This introduction removes the execution phase of $J_0$ in interval $[(i-1)\cdot \delta, i\cdot \delta)$ and increases $C$ and $C^*$ by $\Delta_I C$ and $\Delta_I C^*$, respectively: 
\begin{align*}
\frac{\Delta_I C}{\Delta_I C^*} & = 
\frac{(C_j+\delta)\cdot \frac{\delta}{1-(i-1)\cdot \delta} + \sum_{i=1}^{j-1}{w_i\cdot \delta}}{\frac{i\cdot \delta^2}{1-(i-1)\cdot \delta} +\delta} =
\frac{C_j+\delta+(1-(i-1)\cdot \delta )\cdot \sum_{i=1}^{j-1}{w_i}}{1+\delta} > \frac{C_j}{1+\delta}
\end{align*}
For  $C/C^* \leq  C_j/(1+\delta) < \Delta_I C/ \Delta_I C^*$, the introduction of $J'$ increases the performance ratio $C/C^*$.

For $C/C^* > C_j/(1+\delta)$, we remove job $J_j$ in addition to introducing job $J'$. This exchange of $J_j$ with $J'$ moves the execution phase of $J_0$ from interval $[(i-1)\cdot \delta, i\cdot \delta)$ to interval $[i\cdot \delta, (i+1)\cdot \delta)$ and reduces $C$ and $C^*$ by $\Delta_E C$ and $\Delta_E C^*$, respectively:   
\begin{align*}
\frac{\Delta_E C}{\Delta_E C^*} & = 
\frac{C_j\cdot \left(\frac{\delta}{1-i\cdot \delta} - \frac{\delta}{1-(i-1)\cdot \delta}\right)}{\frac{(i+1)\cdot \delta^2}{1-i\cdot \delta} - \frac{i\cdot \delta^2}{1-(i-1)\cdot \delta}} 
= \frac{C_j}{1+\delta} 
\end{align*}
Therefore, it increases the performance ratio $C/C^*$. A repeated application of both procedures depending on value of the performance ratio yields the claim.
\end{proof}	

Due to Lemma~\ref{lem:non-preempt}, the WSRPT schedule and the optimal schedule of a basic scenario with the largest performance ratio are non-preemptive schedules. 

In the next lemma, we discuss the average completion time of the jobs with the maximal weight over processing time ratio in the WSRPT schedule and in the optimal schedule.

\begin{lemma}
\label{lem:max_ratio}
For every basic scenario, there is a basic scenario with at least the same performance ratio such that this performance ratio does not exceed the ratio of the average completion time of all jobs with the maximal weight over processing time ratio $1/(1-y)$ in the WSRPT schedule over the average completion time of the same jobs in the optimal schedule.
\end{lemma}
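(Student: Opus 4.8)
The plan is to massage a worst-case basic scenario so that all the "non-extremal" jobs (those with weight over processing time ratio strictly smaller than the maximal value $1/(1-y)$) are removed or absorbed, leaving an instance whose performance ratio is governed entirely by the jobs of maximal ratio. Write $y$ for the submission time of the last small job (equivalently, the last small job has weight over processing time ratio $1/(1-y)$, the largest such ratio among all jobs since submission times only grow and $w_i/p_i = \delta/(1-r_i)$ is increasing in $r_i$). Group the jobs into those of maximal ratio $1/(1-y)$ — call the sum of their WSRPT completion times $S$ and of their optimal completion times $S^{*}$, with $m$ of them — and the remaining jobs, which include $J_0$ and all earlier small jobs; denote the sum of their WSRPT completion times by $T$ and optimal completion times by $T^{*}$, with total weight contribution handled via the weights $w_i=\delta/(1-r_i)$. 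The claim is equivalent to: there is a basic scenario with $C/C^{*} \geq $ the old ratio and with $C/C^{*} \leq \bar S / \bar S^{*}$, where $\bar S = S/m$, $\bar S^{*} = S^{*}/m$ are the averages over maximal-ratio jobs.

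The key algebraic fact I would exploit is the standard mediant inequality: if $C/C^{*}$ is a weighted average of the "extremal block" ratio and the "rest block" ratio, then $C/C^{*}$ lies between the two block ratios, so either the extremal block already has ratio $\geq C/C^{*}$ — in which case I am essentially done after discarding the rest — or the rest block has the larger ratio, in which case I want to push weight/structure into the extremal block. More precisely, by Lemma~\ref{lem:non-preempt} I may assume both schedules are non-preemptive, with $J_0$ last in the optimal schedule. In the WSRPT schedule the small jobs run first in index order and $J_0$ runs last, so the completion times are explicit: $C_i = i\cdot\delta$ for the $i$-th small job, $C_0 = 1$. In the optimal schedule the maximal-ratio jobs (all submitted at time $y$ or later) should be scheduled as early as feasible; the point is that their optimal completion times are bounded below by their release dates plus cumulative processing, which is close to $y$, while their WSRPT completion times cluster near the end of the small-job block. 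The transformation I would use: if the non-extremal block has the larger performance ratio, apply the weight-scaling idea from Lemma~\ref{lem:reduction} (or introduce/exchange small jobs as in Lemma~\ref{lem:non-preempt}) to the non-extremal jobs only, which does not change either schedule's structure, until the non-extremal block's ratio drops to exactly $C/C^{*}$; continue until the instance's ratio is dominated by the extremal block. In the limiting instance every non-extremal job can be taken to have vanishing weight relative to the extremal ones, so $C/C^{*}$ is squeezed to $\bar S/\bar S^{*}$.

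Concretely I would proceed in these steps. (1) Invoke Lemma~\ref{lem:non-preempt} to fix the combinatorial structure of both schedules. (2) Write $C/C^{*}$ as the mediant $\frac{(\text{extremal weight})\bar S + (\text{rest contribution})}{(\text{extremal weight})\bar S^{*} + (\text{rest}^{*})}$ and record that it lies between the extremal-block ratio $\bar S/\bar S^{*}$ and the rest-block ratio. (3) If $\bar S/\bar S^{*} \geq C/C^{*}$, delete all non-extremal small jobs; deletion cannot decrease the performance ratio of the remaining jobs (same argument as in Lemma~\ref{lem:reduction}: removing jobs only decreases the optimal cost of the survivors by at most their contribution while the WSRPT cost of survivors is unchanged), and for the surviving instance the ratio is exactly $\bar S/\bar S^{*}$, giving the claim with equality. (4) If instead the rest block has the strictly larger ratio, scale down the weights of all non-extremal jobs uniformly; as in the proof of Lemma~\ref{lem:reduction} this leaves both the WSRPT and optimal schedules intact but increases $C/C^{*}$, and in the limit the non-extremal contributions to both numerator and denominator vanish, so $C/C^{*}\to \bar S/\bar S^{*}$ from below, which is exactly the asserted bound.

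The main obstacle I anticipate is step (4)'s bookkeeping: one must check that scaling the non-extremal weights down never makes the non-extremal block ratio drop \emph{below} $C/C^{*}$ before the extremal block takes over (so that the process is monotone and the limit is clean), and that the optimal schedule genuinely stays optimal under this scaling — the latter needs the observation that for an equality instance with $J_0$ last in OPT, the relative order of the small jobs in OPT is irrelevant to the cost, so shrinking their weights cannot make a different OPT structure cheaper. A secondary subtlety is correctly identifying which jobs count as "maximal ratio": since $w_i/p_i = \delta/(1-r_i)$ and the $r_i$ are nondecreasing, the maximal ratio is $1/(1-y)$ where $y = r_n$, and the extremal block is precisely the (nonempty) set of small jobs submitted at time $y$ — this uses the equality-instance structure and the assumption, already available from the splitting transformation discussion, that we may take many tiny jobs at that last submission time. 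Once these monotonicity and optimality checks are in place, the mediant inequality does the rest.
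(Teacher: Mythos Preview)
There is a genuine direction error in step~(4). If the rest block has the strictly larger ratio, then by the mediant inequality $\bar S/\bar S^* < C/C^*$, and uniformly scaling down the weights of the rest block \emph{decreases} the mediant $C/C^*$ toward $\bar S/\bar S^*$, approaching it from \emph{above}, not from below. The resulting instance therefore has \emph{smaller} performance ratio than the original, which violates the lemma's requirement that the new basic scenario have at least the old performance ratio. (In the proof of Lemma~\ref{lem:reduction} the segment being scaled down is the one with the \emph{smaller} ratio; that is why the overall ratio goes up there.) Swapping to scale down the extremal block instead does raise $C/C^*$, but it leaves all completion times and hence $\bar S/\bar S^*$ unchanged, so the target inequality $C/C^*\le \bar S/\bar S^*$ only gets worse; moreover, any selective weight scaling destroys the equality-instance structure, since after scaling a small job's Smith ratio at its release no longer equals that of the running job $J_0$, and the result is no longer a basic scenario in the required sense. (Step~(3) is also off: when $\bar S/\bar S^*\ge C/C^*$ the original instance already satisfies the conclusion and no deletion is needed.)

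The paper's proof uses a different transformation that sidesteps both problems: when $C/C^*>\bar S/\bar S^*$, it \emph{reduces the submission time} of the maximal-ratio jobs from $y$ to $y-\delta$, simultaneously lowering their weight to $\delta/(1-(y-\delta))$ so that the equality property is preserved. This move (i) scales down the weighted contribution of the extremal block, which raises $C/C^*$ because that block had the smaller ratio, and (ii) decreases the optimal completion time $C_j^*$ of the first maximal-ratio job while $C_j$ stays fixed, which raises the new $\bar S/\bar S^*$ after re-indexing. Iterating drives the two quantities toward each other until the claim holds. As a side remark, your description of the WSRPT schedule is inverted: in the basic scenario WSRPT runs $J_0$ first without preemption (so $C_0=1$ and every small job completes after time~$1$); it is the optimal schedule that runs the small jobs first.
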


\begin{proof}
Let job $J_j$ be the first job with the maximal weight over processing time ratio $1/(1-y)$ in an instance of a basic scenario. All jobs $J_j$ to $J_n$ are identical with submission time $y$. The WSRPT schedule and the optimal schedule execute these jobs one after the other with $C_j=1+\delta$ in the WSRPT schedule and $C_j^*=y+\delta$ in the optimal schedule. Then $C_j+(n-j)\cdot \delta/2$ and $C_j^*+(n-j)\cdot \delta/2$ are the average completion times of these jobs in the WSRPT schedule and the optimal schedule, respectively. 

For $C/C^*> (C_j+(n-j)\cdot \delta/2) / (C_j^*+(n-j)\cdot \delta/2)$, we reduce the submission time and the weight of jobs $J_{j}$ to $J_n$ to $y-\delta$ and $\delta /(1-(y-\delta))$, respectively. The reduction increases the performance ratio and decreases $C_j^*$ while $C_j$ remains unchanged after adjusting the job index $j$ to the new maximal weight over processing time ratio. We apply this reduction until the claim holds. 
\end{proof}	

We only use Lemma~\ref{lem:max_ratio} to show that a small job has its largest possible weight if it starts at its submission time in the optimal schedule. Later we will prove in Lemma~\ref{lem:group}, that there is only one job with the maximal weight over processing time ratio in a basic scenario with the worst performance ratio. 

\begin{lemma}
\label{lem:max_weight}
Let job $J_j$ be the first job with the maximal weight over processing time ratio $1/(1-y)$ in an instance of a basic scenario. For every such basic scenario, there is a basic scenario that has at least the same performance ratio and completes a job with weight $\delta / (1-(i-1)\cdot \delta)$ at time $i\cdot \delta \leq C_j^* =y+\delta$ in the optimal schedule for every $i \leq y/\delta$.
\end{lemma}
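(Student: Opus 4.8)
The plan is to build on the two preceding lemmas. By Lemma~\ref{lem:non-preempt} we may assume the WSRPT schedule and the optimal schedule are non-preemptive and that $J_0$ occupies the last slot of the optimal schedule, so every small job completes before $J_0$ starts there. Then the jobs with the maximal weight over processing time ratio $1/(1-y)$, all released at $y$, form the consecutive block $[1,1+M)$ immediately after $J_0$ in the WSRPT schedule (at time $C_0=1$ all small jobs are already released and the maximal ones have the largest Smith ratio) and a consecutive block $[y,y+M)$ in the optimal schedule; in particular $C_j=1+\delta$ and $C_j^{*}=y+\delta$ as in Lemma~\ref{lem:max_ratio}. The remaining small jobs, the \emph{prefix jobs}, are released in $[0,y)$; in the WSRPT schedule they run, in order of non-increasing weight, after the maximal block, so each completes later than $1+M$, and in the optimal schedule they fill (a prefix of) the slots $[0,\delta),\dots,[y-\delta,y)$, possibly with idle time or with overflow into the slots after $y+M$. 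By Lemma~\ref{lem:max_ratio} the performance ratio is at most the ratio of the average completion time of the maximal jobs in the WSRPT schedule over that in the optimal schedule, and this bound is untouched by any change to the prefix jobs that preserves the two maximal blocks.

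The guiding observation is that weight on a prefix job ``wants'' to sit as early as possible in the optimal schedule: a prefix job completes at $\le y$ in the optimal schedule but later than $1+M$ in the WSRPT schedule, and since $1+M$ is at least the average WSRPT completion time of the maximal jobs while $y$ is at most their average optimal completion time, the WSRPT-to-optimum ratio of the completion time of any prefix job is at least the bound of Lemma~\ref{lem:max_ratio}. Hence shifting weight into an earlier optimal slot cannot lower the performance ratio, and the extremal prefix is the one putting the largest admissible weight into each optimal slot: a job occupying $[(i-1)\delta,i\delta)$ must be released by $(i-1)\delta$, and in an equality instance a job released at $r$ has weight $\delta/(1-r)$, so the heaviest such job has weight $\delta/(1-(i-1)\delta)$ --- precisely the asserted structure.

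To realise this, I would repeatedly repair the first slot $[(i-1)\delta,i\delta)$ with $i\le y/\delta$ not occupied by a job released at $(i-1)\delta$. If it is idle, introduce a new job released at $(i-1)\delta$ of weight $\delta/(1-(i-1)\delta)$: it fills this slot in the optimal schedule (consuming only idle time, so the maximal blocks and $C_j^{*}=y+\delta$ are untouched) and enters the non-increasing-weight prefix block of the WSRPT schedule, completing at some time $>1+M$ and pushing every lighter prefix job back by $\delta$. If the slot holds a duplicate $J_k$ released at $(m-1)\delta<(i-1)\delta$, delete $J_k$ and simultaneously introduce the correct job $J'$ of release date $(i-1)\delta$: in the optimal schedule $J'$ takes slot $i$ and, deletion cancelling insertion, the later slots and both maximal blocks stay put; in the WSRPT schedule $J'$ enters in weight order and $J_k$ disappears. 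In each case one evaluates $\Delta C$ and $\Delta C^{*}$ --- in the optimal schedule the change is confined to slot $i$ plus a WSPT reshuffle of strictly lighter prefix jobs, each moved by $\pm\delta$; in the WSRPT schedule every affected prefix job lies in the single block after the maximal block --- and checks $\Delta C/\Delta C^{*}\ge C/C^{*}$. Here one has to sum the contributions rather than argue termwise: the cumulative weight of the lighter prefix jobs produces exactly the expression $-(1-x)\ln(1-x)$ with $x=(i-1)\delta$ that reappears in the function $f$, and one verifies that $\bigl(1+M-(1-x)\ln(1-x)\bigr)/x$ decreases in $x$ and hence dominates the Lemma~\ref{lem:max_ratio} bound for all $x\le y$. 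Iterating over $i=1,\dots,y/\delta$ then yields the lemma.

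I expect the swap case to be the main obstacle: once there are prefix jobs whose weight lies strictly between $w_k$ and $\delta/(1-(i-1)\delta)$ --- typically the correct jobs of the already-repaired slots $m+1,\dots,i-1$ --- the insertion of $J'$ and deletion of $J_k$ shift a whole block of prefix jobs in the WSRPT schedule and the completion time of $J'$ is governed by the total length of the prefix jobs still heavier than it, so the crude estimate need not suffice; exactly as in the proof of Lemma~\ref{lem:non-preempt}, one then splits into the case $\Delta C/\Delta C^{*}> C/C^{*}$, where the repair already helps, and the complementary case, where one applies an auxiliary move (deleting the lightest prefix job, or shrinking $y$ as in Lemma~\ref{lem:max_ratio}) instead. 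Processing the slots in increasing order of $i$ keeps the bookkeeping linear, since every reshuffled job is already in its final position.
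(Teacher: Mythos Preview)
Your guiding observation is exactly the paper's key: every prefix job $J_h$ in a slot $i\le y/\delta$ satisfies $C_h>C_j+(n-j)\delta$ (it sits after the whole maximal block in WSRPT) and $C_h^*<C_j^*=y+\delta$, so by Lemma~\ref{lem:max_ratio} one has $C_h/C_h^*>C/C^*$. But the paper's proof stops right there. It does not introduce or delete jobs, does not compute $\Delta C/\Delta C^*$, does not invoke the function $f$, and does not need any auxiliary move: it simply \emph{increases the weight of $J_h$} (equivalently, its release date) up to $\delta/(1-(i-1)\delta)$. Since the marginal contribution of a unit of weight on $J_h$ to the numerator is $C_h$ and to the denominator is at most $C_h^*$ (the old optimal schedule with $J_h$ still in slot $i$ remains feasible because $r_h\le(i-1)\delta$ throughout), the ratio can only go up. Two of the complications you build around are therefore phantom: the ``idle slot'' case cannot occur once Lemma~\ref{lem:non-preempt} has been applied (every slot before $J_0$ is occupied), and your delete/insert ``swap'' is literally the same operation as ``raise $r_h$'', just phrased in a way that creates bookkeeping.

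Your anticipated main obstacle---reshuffling of intermediate-weight prefix jobs in the WSRPT schedule---also dissolves under the paper's framing. Think of the weight increase as continuous: at every non-tie value of $w_h$ the schedules are fixed and the derivative of $C$ with respect to $w_h$ is the current $C_h$, which stays strictly larger than $C_j+(n-j)\delta$ because $J_h$ cannot overtake the maximal block until $w_h$ reaches $\delta/(1-y)$, and we never push it that far. At a tie point two equal-weight jobs swap positions, leaving $C$ continuous. Hence $C/C^*$ is nondecreasing along the entire path, no case split or auxiliary deletion is needed, and the monotonicity/$f$-type estimate you sketch is unnecessary. Your plan would eventually succeed, but the paper's two-line argument gets there directly from the inequality you already isolated.
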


\begin{proof}
We consider a job $J_h$ with $w_h<\delta/(1-(i-1)\cdot \delta)$ and $C_h^*=i\cdot \delta<y+\delta =C_j^*$ and assume that Lemma~\ref{lem:max_ratio} holds. Due to $C_h>C_j+(n-j)\cdot \delta$, increasing the weight of $J_h$ also increases the performance ratio.
\end{proof}
Due to Lemmas~\ref{lem:non-preempt} and \ref{lem:max_weight}, we address basic scenarios that only execute jobs at their submission times in interval $[0,y)$ of the optimal schedule. Then the total contribution of the jobs scheduled in interval $[0,y)$ in the optimal schedule is 
\begin{align*}
\lim_{\delta\rightarrow 0} \sum_{i=1}^{\frac{y}{\delta}} \frac{i\cdot \delta}{1-(i-1)\cdot \delta} & = \int_0^y \frac{x}{1-x} dx = -y - \ln(1-y).
\end{align*}
In such basic scenario, a small job contributes more than the performance ratio to this performance ratio if it has less than the maximum weight over processing time ratio and starts at its submission time in the optimal schedule. The basic scenarios of the lower bound schedules presented by \cite{Meg06} and \cite{XiC12} observe the properties of Lemmas~\ref{lem:non-preempt}, \ref{lem:max_ratio}, and \ref{lem:max_weight}. However, these examples assume a given total length of jobs with the maximum weight over processing time ratio instead of optimizing this total length.  

The next lemma shows that a basic scenario with the worst performance ratio not only requires this optimization but also has a different structure. Therefore, this lemma is one of the key lemmas of this paper. In this lemma, we address the remaining interval of the optimal schedule. This interval contains all jobs that complete later than time $y$ in the optimal schedule. We partition these small jobs into groups of jobs with the same weight over processing time ratio, that is, with jobs submitted at the same time. Since the WSRPT schedule and the optimal schedule are non-preemptive, all jobs of a group execute one after the other in the optimal schedule and in the WSRPT schedule. For the sake of simplicity, we assume at the moment that all jobs of a group are identical. Since one small job completes at time $y+\delta$ in the optimal schedule, there is at least one group in every basic scenario, see Lemma~\ref{lem:max_ratio}. We use $q_i$ to indicate the number of jobs with weight $\delta/(1-i_i\cdot \delta)$ in group $i$ of a basic scenario. The last group $g$ contributes $C_g$ and $C^*_g$ to $C$ and $C^*$, respectively: 
\begin{align}
\label{eq:group_WSRPT}
C_g & = 
\left(1+y+\left(-i_g+\sum_{i=1}^{g-1} q_i +  \frac{\left\lceil q_g\right\rceil}{2} \right) \cdot \delta\right)\cdot \frac{q_g\cdot \delta}{1-i_g\cdot \delta} + \sum_{i=1}^{i_g} \frac{\delta}{1-(i-1)\cdot \delta}\cdot q_g\cdot \delta \\
\label{eq:group_optimal}
C^*_g & = \left(y+\left(\sum_{i=1}^{g-1} q_i + \frac{ \left\lceil q_g\right\rceil}{2} \right)\cdot \delta \right) \cdot \frac{q_g\cdot \delta}{1-i_g\cdot \delta} + q_g\cdot \delta
\end{align}
The average completion time of the jobs in group $g$ is $\lceil q_g\rceil/2 \cdot \delta$. Using Eq.~\eqref{eq:group_WSRPT} and \eqref{eq:group_optimal}, we obtain
\begin{align} 
\label{eq:group_ratio}
t_g (i_g \cdot \delta) & =\lim_{\delta \rightarrow 0} \frac{C_g}{C^*_g} = \frac{\frac{1+y+\left(-i_g+\sum_{i=1}^{g-1} q_i + \frac{\left\lceil q_g\right\rceil}{2}\right) \cdot \delta}{1-i_g\cdot \delta}-\ln (1-i_g\cdot \delta)}{\frac{y+\left(\sum_{i=1}^{g-1} q_i + \frac{\left\lceil q_g\right\rceil}{2}\right)\cdot \delta}{1-i_g\cdot \delta}+1} 
= 1 - \frac{\ln (1-i_g\cdot \delta)\cdot (1-i_g\cdot \delta)}{1+y+\left(-i_g+\sum_{i=1}^{g-1} q_i +\frac{ \left\lceil q_g\right\rceil}{2}\right) \cdot \delta}.
\end{align}
Equation~\eqref{eq:group_ratio} contains function f(x) introduced in the beginning of Section~\ref{sec:analysis} and provides the base for the key lemma.

\begin{lemma}
\label{lem:group}
Let $1/(1-y)$ be the largest weight over processing time ratio in a basic scenario. For every basic scenario, there is a basic scenario with at least the same performance ratio and the following properties:
\begin{enumerate}
\item For the last group $g$ of the basic scenario, $t_g(i_g\cdot \delta)$ is the performance ratio.
\item There is a group containing a single job for every $i_{g'}$ with $i_g \leq i_{g'} \leq y/\delta$. This job has a slightly larger processing time than $\delta$.
\item For every group, the ratio of the completion time of the single job in this group in the WSRPT schedule over the completion time of this job in the optimal schedule is the performance ratio. 
\end{enumerate}
\end{lemma}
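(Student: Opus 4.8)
The plan is to continue transforming a basic scenario that already satisfies the conclusions of Lemmas~\ref{lem:non-preempt}--\ref{lem:max_weight} by moves that leave the WSRPT schedule combinatorially unchanged and do not decrease the performance ratio $\rho:=C/C^{*}$, until the three listed properties hold. I would establish property~(1) first and bootstrap from it. The tool is the mediant inequality applied to the split of $C$ and $C^{*}$ into the contribution $C_g,C_g^{*}$ of the last group $g$ and the contribution of everything else — job $J_0$, the jobs run in $[0,y)$ of the optimal schedule, and the groups $1,\dots,g-1$. If $t_g(i_g\delta)=C_g/C_g^{*}$ falls short of $\rho$, I would delete group $g$, making $g-1$ the new last group; if it exceeds $\rho$, I would add identical jobs to group $g$, which increases the term $\lceil q_g\rceil\delta/2$ in the denominator of~\eqref{eq:group_ratio} and hence lowers $t_g(i_g\delta)$ while at the same time pulling $\rho$ up toward it. In each case one must account for the side effect of the move — deleting or enlarging group $g$ shifts the optimal completion time of $J_0$ and the WSRPT completion times of the jobs submitted before $i_g\delta$ — but, since by Lemma~\ref{lem:max_ratio} at least the group of the jobs defining $y$ is always present, iterating these two moves terminates with $\rho=t_g(i_g\delta)$, i.e.\ property~(1).

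For properties~(2) and~(3) I would examine the jobs that complete after time $y$ in the optimal schedule; in both schedules these run group by group in order of decreasing submission time — in the optimal schedule between the jobs of $[0,y)$ and $J_0$, and in the WSRPT schedule among the last of the small jobs. Writing $L_i$ for the total processing length of those of them submitted at time $i\delta$ or later, such a job submitted at $x=i\delta$ completes, up to terms vanishing with $\delta$, at $1+y-x+L_i$ in the WSRPT schedule and at $y+L_i$ in the optimal schedule, so its individual ratio equals $(1+y-x+L_i)/(y+L_i)$, a quantity strictly decreasing in $L_i$. I would first rule out empty indices strictly between $i_g$ and $y/\delta$ (the endpoints are occupied a priori by the last group and by the jobs defining $y$, see Lemma~\ref{lem:max_ratio}): inserting a job of infinitesimal length at such an index $i'$ raises only the $L_j$ with $j\le i'$, and a short computation shows this does not decrease $\rho$, so we may fill every such index; this yields a group at every index $i_{g'}\in\{i_g,\dots,y/\delta\}$, which is property~(2) apart from the processing times. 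To pin those down and to obtain property~(3), I would equalize the individual ratios of all these groups to the common value $\rho$ by repeatedly moving an infinitesimal amount of processing length between two neighbouring occupied indices in the direction that narrows the gap between their individual ratios, checking at each step that $\rho$ does not decrease. In the resulting configuration every group's individual ratio equals $\rho$, which is property~(3); solving $(1+y-x_i+L_i)/(y+L_i)=\rho$ over all these indices yields $L_i=(1-x_i)/(\rho-1)-y$, hence each group is the single job of processing length $\ell_i=L_i-L_{i+1}=\delta/(\rho-1)$, which exceeds $\delta$ since $\rho<2$ — completing property~(2).

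The main obstacle is the equalization step of the second paragraph: one must verify that moving processing length between two adjacent occupied indices toward equal individual ratios never decreases $\rho$, and this is delicate because such a move also changes the total length of the post-$y$ jobs and hence shifts the optimal completion time of $J_0$ and the WSRPT completion times of the jobs run in $[0,y)$, which feed back into $\rho$; the same feedback complicates the deletion and enlargement moves of the first paragraph. The cleanest route I see is to write $\rho$ as a function of the occupied index set and the length vector $(\ell_i)$, identify it — with suitable constants $k$ and $c$ — with $1$ plus the auxiliary function $f$ from the beginning of Section~\ref{sec:analysis}, and combine the concavity of $f$ with the monotonicity of the individual-ratio formula in $L_i$ to conclude that every configuration whose group ratios are not all equal to $\rho$ is dominated by one in which they are, which reduces property~(3) to the displayed equation. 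A secondary point is to keep the worst-case tie-breaking of WSRPT within a submission time consistent when reading off the above completion times; in the $\delta\to0$ limit this affects only lower-order terms and therefore does not enter the final structure.
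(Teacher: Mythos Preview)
Your overall strategy---iteratively modifying the instance by moves that do not decrease $\rho$ until the three properties hold---matches the paper, and your per-group ratio $(1+y-x+L_i)/(y+L_i)$ is exactly the paper's quantity $r_h$. But the specific moves you choose differ from the paper's, and one of them has a real gap.

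For property~(1) in the case $t_g>\rho$, you enlarge group $g$. The problem is that enlarging adds jobs whose \emph{marginal} contribution ratio is $C_g'(q_g)/C_g^{*\prime}(q_g)$, not $t_g=C_g/C_g^{*}$; since $t_g$ is strictly decreasing in $q_g$, the marginal ratio lies strictly below $t_g$, and nothing prevents it from dipping below $\rho$---in which case $\rho$ goes \emph{down}, not up. The paper avoids this entirely: when $t_g>\rho$ it appends a \emph{new} last group $g+1$ at a smaller index $i_{g+1}<i_g$ chosen (via the intermediate-value behaviour of $t_{g+1}(x)$, which equals $1$ at $x=0$) so that $t_{g+1}(i_{g+1}\delta)=\rho$ exactly. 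That move leaves $\rho$ unchanged and makes the new last group satisfy~(1) in one shot.

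For properties~(2) and~(3) the paper's mechanism is also different and simpler than yours. Instead of redistributing processing length between neighbouring indices---the step you rightly flag as the ``main obstacle'' because it couples all the $L_i$ and feeds back into $\rho$ through the delayed completion of $J_0$ and of the $[0,y)$-jobs---the paper changes the \emph{submission index} $i_h$ of each group while keeping the $q_i$'s fixed: if $r_h>\rho$ it increases $i_h$, if $r_h<\rho$ it decreases $i_h$. Shifting $i_h$ moves $r_h$ monotonically toward $\rho$ and (since the affected jobs' completion-time ratio exceeds or falls short of $\rho$ accordingly) pushes $\rho$ the right way. Once every $r_h=\rho$, any group with $q_h>1$ is split into two subgroups, which then have $r_{h'}>\rho>r_{h''}$ and are re-adjusted by the same index move; iterating gives single-job groups at every index in $[i_g,y/\delta]$, with the ``slightly larger than $\delta$'' processing time arising as the slack needed to hit $r_h=\rho$ exactly. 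Your concavity-of-$f$ route may be salvageable, but the paper's submission-time move sidesteps the feedback you are worried about and gives~(2) and~(3) directly.
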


\begin{proof}
We describe several modifications that we may apply repeatedly and/or in some combination. First, we discuss function $t_g(x)$ in Eq.~\eqref{eq:group_ratio}. Function $t_g(x)-1$ is function $f(x)$ with $k<0$ introduced in Section~\ref{sec:analysis}.

For $t_g(i_g \cdot \delta) < C/C^*$, we remove group $g$. Such removal increases the performance ratio of the basic scenario. 

For $t_{g}(i_{g}\cdot \delta)> C/C^*$, we introduce a new last group $g+1$ with $i_{g+1} < i_g$ and a suitable $q_{g+1}$ such that $t_{g+1}(i_{g+1}\cdot \delta)=C/C^*$ holds. This introduction does not decrease the performance ratio. Therefore, we have $t_g(i_g\cdot \delta) = C/C^*$.

We consider some group $h$ and determine ratio $r_h$:
\begin{align*}
r_h & = \frac{1+y +\left(-i_h+\sum_{i=1}^{h-1} q_i + \frac{\left\lceil q_h\right\rceil}{2}\right) \cdot \delta}{y+\left(\sum_{i=1}^{h-1} q_i + \frac{\left\lceil q_h\right\rceil}{2}\right)\cdot \delta}
\end{align*}
For $r_h > C/C^*$ with $h>1$, we increase $i_h$ resulting in a decrease of $r_h$ and an increase of the performance ratio. For $h=2$, the increase may change $y$.

For $r_h < C/C^*$, a decrease of $i_h$ produces increases of $r_h$ and of the performance ratio. Therefore, we assume that $r_h=C/C^*$ holds for every group $h$.

For $q_h>1$, we split group $h$ into group $h'$ followed by group $h''$ with $i_{h'}=i_{h''}=i_h$ and $q_{h'}+q_{h''}=q_h$. Therefore, we have $r_{h'} > C/C^* > r_{h''}$ and modify $i_{h'}$ and $i_{h''}$ as explained above. Since this procedure increases the performance ratio, repeated application of this procedure produces groups of minimal size. To guarantee the relationship $r_h=C/C^*$, we allow processing times slightly larger than $\delta$ for jobs completing after $y$ in the optimal schedule. Then every group consists of a single job.
\end{proof}
We only consider basic scenarios that observe Lemmas~\ref{lem:non-preempt}, \ref{lem:max_weight}, and \ref{lem:group}. In such basic scenario, we have $C_j/C_j^*=C/C^*$ for every small job $J_j$ that does not start at its submission time in the optimal schedule. Informally, these jobs do not directly increase the performance ratio but increase the ratio $C_k/C_k^*$ of a small job $J_k$ that has a smaller weight over processing time ratio than job $J_j$ and starts at its submission time in the optimal schedule. Only the long job in such basic scenario reduces the performance ratio.

We use $v=i_g \cdot \delta$ with $g$ being the last group in such basic scenario. Since the processing times of the small jobs are arbitrarily small, we apply a continuous extension of the problem: for any value $x \in [v,y]$, let $\Delta_x$ be the total (continuous) amount of jobs with at least $1/(1-x)$ as weight over performance ratio. For all $x \in [v,y]$, the worst case requires
\begin{align*}
\frac{1+ \Delta_x}{x+ \Delta_x} = \frac{C}{C^*} = \frac{1}{y} & \Leftrightarrow
\Delta_x  = \frac{y-x}{1-y} 
\end{align*}
resulting in 
\begin{align*}
-\ln (1-v) = \frac{C}{C^*} & = 1 - \frac{\ln (1-v) \cdot (1-v)}{1+ \Delta_v}= 1 - \ln (1-v) \cdot (1-y)
\end{align*}
due to Eq.~\eqref{eq:group_ratio} and Lemma~\ref{lem:group}. 

Lemma~\ref{lem:group} yields the worst case performance ratio of any basic scenario. Note that increasing the weight of any small job $J_j$ with less than the maximum weight over processing time ratio is impossible for jobs starting in interval $[0,y)$ in the optimal schedule due to the equality instance property while it reduces the ratio $C_j/C_j^*$ below the performance ratio for all other small jobs. Due to Lemma~\ref{lem:non-preempt}, we must execute the long job at the end of the basic scenario with worst case performance. 

\begin{lemma}
\label{lem:basic1}
The performance ratio of a basic scenario is at most $R_B=1.2259$.
\end{lemma}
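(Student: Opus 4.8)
The plan is to take the structural restrictions already established — by Lemmas~\ref{lem:non-preempt}, \ref{lem:max_weight} and \ref{lem:group} and the discussion following them — as given, write the total weighted completion times $C$ and $C^*$ of the remaining basic scenario explicitly, and reduce the claim to locating the root of one transcendental equation. In the reduced scenario the WSRPT schedule executes the long job $J_0$ (with $w_0=p_0=1$) in $[0,1)$ and then the small jobs in order of decreasing weight over processing time ratio, i.e.\ decreasing submission time, in $[1,1+W)$, where $W$ is the total processing time of the small jobs. The optimal schedule executes $J_0$ last, without preemption; before that, at each position $x\in[0,y)$ it runs, at its submission time, processing-time density one of the jobs with ratio $1/(1-x)$, while the remaining small jobs — those with ratio at least $1/(1-v)$ that cannot start at their submission time because high-ratio work arrives faster than the machine can process it — form a backlog of processing-time density $y/(1-y)$ on $[v,y)$ that the optimal processes in $[y,W)$ in order of decreasing ratio. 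The preceding discussion already yields $C/C^* = 1/y = -\ln(1-v)$, hence $v=1-e^{-1/y}$, and $\Delta_x=(y-x)/(1-y)$ for $x\in[v,y]$; these fix $W=y(1-v)/(1-y)$ and the WSRPT makespan $1+W=(1-vy)/(1-y)$.

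First I would assemble $C^*$. The jobs starting at their submission time contribute $\int_0^y \frac{x}{1-x}\,dx=-y-\ln(1-y)$, as already noted. A backlog job at position $x$ completes in the optimal schedule at $y+\frac{y(y-x)}{1-y}$, so the backlog contributes $\int_v^y \frac{y}{(1-y)(1-x)}\bigl(y+\frac{y(y-x)}{1-y}\bigr)\,dx=\frac{y^2(y-v)}{(1-y)^2}$, the integrand collapsing to a constant. Finally $J_0$ completes at $(1-vy)/(1-y)$. For $C$, $J_0$ contributes $1$; the small jobs on $[v,y)$ (density $1/(1-y)$) run immediately after $J_0$, the one at position $x$ finishing at $1+\frac{y-x}{1-y}$, so they contribute $\int_v^y \frac{1}{(1-y)(1-x)}\bigl(1+\frac{y-x}{1-y}\bigr)\,dx=\frac{y-v}{(1-y)^2}$; the remaining small jobs on $[0,v)$ (density one) run last, the one at position $x$ finishing at $1+(v-x)+\frac{y-v}{1-y}$, so they contribute $\int_0^v \frac{1}{1-x}\bigl(1+v-x+\frac{y-v}{1-y}\bigr)\,dx=v-W\ln(1-v)=v+\frac{1-v}{1-y}$, using $-\ln(1-v)=1/y$. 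Collecting,
\begin{align*}
C & = 1+v+\frac{y-v}{(1-y)^2}+\frac{1-v}{1-y}, & C^* & = -y-\ln(1-y)+\frac{y^2(y-v)}{(1-y)^2}+\frac{1-vy}{1-y}.
\end{align*}

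Imposing $C/C^*=1/y$, i.e.\ $yC=C^*$, the rational terms largely cancel and the identity reduces to
\begin{align*}
\frac{3y-(1+v)\,y^2-1}{1-y} & = -\ln(1-y), & v & = 1-e^{-1/y}.
\end{align*}
Substituting the second relation into the first leaves a single equation in $y$; its root in $(0,1)$ is $y\approx 0.8157$, so $C/C^*=1/y=R_B$ with $R_B=1.2259$. Since Lemmas~\ref{lem:non-preempt}, \ref{lem:max_weight} and \ref{lem:group} map every basic scenario onto a basic scenario of this form with at least the same performance ratio, the performance ratio of every basic scenario is at most $R_B$.

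The part I expect to be the main obstacle is the bookkeeping for $C$ and $C^*$ in the $\delta\to 0$ limit — in particular correctly identifying the backlog of small jobs that the optimal schedule cannot start at their submission times, and their ordering in both schedules — together with verifying that the final transcendental equation has exactly one relevant root and that this root produces the stated value $1.2259$. The integrals themselves telescope, so the computation is delicate rather than heavy, and the only genuinely numerical step is the root finding.
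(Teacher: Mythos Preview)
Your proposal is correct and takes essentially the same approach as the paper: both compute $C$ and $C^*$ in the continuous limit for the reduced basic scenario produced by Lemmas~\ref{lem:non-preempt}, \ref{lem:max_weight}, and \ref{lem:group}, and both obtain $y\approx 0.8157$, $v\approx 0.7066$, and $C/C^*=1.2259$ numerically (your expressions for $C$ and $C^*$ coincide with the paper's after substituting $-\ln(1-v)=1/y$). The only presentational difference is that the paper writes the ratio $C/C^*$ with $\ln(1-v)$ still present and invokes a two-parameter numerical optimization, whereas you substitute the relations $C/C^*=1/y=-\ln(1-v)$ from the preceding discussion up front to reduce the problem to locating the single root of $\frac{3y-(1+v)y^2-1}{1-y}=-\ln(1-y)$ with $v=1-e^{-1/y}$; this is a tidier endgame but not a different argument.
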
 
 
\begin{proof}
Using the continuous extension, the worst case expression of a basic scenario 
\begin{align*}
\frac{C}{C^*} & = \frac{1 + \int_0^{\Delta_v}\frac{\Delta_x+1}{1-x}d\Delta_x + \int_0^v \frac{1+\Delta_v+v-x}{1-x}dx}{\int_0^y \frac{x}{1-x}dx + \int_y^{\Delta_v+v} \frac{\Delta_x+x}{1-x}d(\Delta_x+x)+v+\Delta_v+1} \\
& = \frac{1+\int_0^{\Delta_v}\frac{1}{1-y}d\Delta_x + \int_0^v (1+\frac{\Delta_v+v}{1-x})dx }{\int_0^y (-1+\frac{1}{1-x})dx + \int_y^{\Delta_v+v} \frac{y}{1-y}d(\Delta_x+x)+v+\Delta_v+1} = \frac{1+\frac{y-v}{(1-y)^2}+v-\frac{y\cdot (1-v)}{1-y}\cdot \ln (1-v)}{-y-\ln (1-y) + \frac{y^3-v\cdot y^2}{(1-y)^2}+\frac{1-v\cdot y}{1-y}}
\end{align*}
yields the performance ratio $1.2259$ for $y=0.8157$ and $v=0.7066$ by applying numerical optimization. 
\end{proof}
In Fig~\ref{fig:basic1}, we show Gantt charts for the WSRPT schedule and the optimal schedule of the basic scenario with the largest performance ratio. The Gantt charts use large values for $\delta$ to better picture the use of these jobs. Note that the schedule of the jobs is only an informal representation showing the long job, the small jobs starting not later than $y$ in the optimal schedule and the slightly larger small jobs representing the groups. On top of the Gantt charts, we display a continuous weight profile graph of the corresponding schedule. At time $t$, the weight profile is the weight over processing time ratio for the job executing at this time in the represented schedule. Therefore, the continuous weight profile of the long job $J_0$ is a horizontal line with value $1$ for the time interval in which the schedule executes $J_0$ while for every small job, the continuous weight profile consists of a single point at the time of the execution of this job.
\begin{figure}[ht]
\centering
\includegraphics[width=9cm]{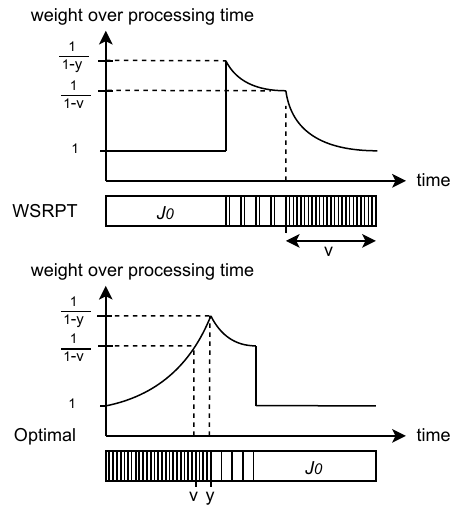}
\caption{\label{fig:basic1} Basic scenario: WSRPT and optimal schedules with their continuous weight profiles on top}
\end{figure}
Since the total weight $W$ and the length $L$ of a segment are important in the next chapter, we introduce their expressions for the basic scenario with the worst performance ratio.
\begin{align}
\label{eq:weight}
W & =1 + \int_0^{\Delta_v} \frac{1}{1-y\cdot (1+\Delta_x)+\Delta_x}d\Delta_x + \int_0^v \frac{1}{1-x}dx = 1+\frac{1}{1-y}\cdot \ln \left( \frac{1-v}{1-y}\right) - \ln (1-v)= 4.7521 \\
\label{eq:length}
L & = 1+ \Delta_v + v = \frac{1-v\cdot y}{1-y}=2.2995
\end{align}

In Table~\ref{tab:basic_result}, we show the numerically derived results for $z$, $L$, $W$, $C$, $C^*$, and $C/C^*$ in dependence of $y$. The variable $z$ denotes the size of a single group with the largest weight over processing time ratio.
\begin{table}
\begin{tabular}{|c|c|c|c|c|c|c|c|c|}
$y$ & $v$ & $z$ & $C$ & $C^*$ & $C/C^*$ & $W$ & $L$ & $W/L$ \\ \hline
$0.10$ & - & - & $1.1105$ & $1.1054$ & $1.0047$ & $1.1054$ & $1.1000$ & $1.0049$ \\ \hline
$0.20$ & - & - & $1.2446$ & $1.2231$ & $1.0176$ & $1.2231$ & $1.2000$ & $1.0193$ \\ \hline
$0.30$ & - & - & $1.4070$ & $1.3567$ & $1.0371$ & $1.3567$ & $1.3000$ & $1.0436$ \\ \hline
$0.10$ & - & $1.3270$ & $3.7031$ & $3.5581$ & $1.0407$ & $2.5798$ & $2.4270$ & $1.0630$ \\ \hline
$0.40$ & - & - & $1.6043$ & $1.5108$ & $1.0619$ & $1.5108$ & $1.4000$ & $1.0792$ \\ \hline
$0.20$ & - & $1.2335$ & $4.0187$ & $3.7216$ & $1.0799$ & $2.7675$ & $2.4355$ & $1.1363$ \\ \hline
$0.50$ & - & - & $1.8466$ & $1.6931$ & $1.0906$ & $1.6931$ & $1.5000$ & $1.1288$ \\ \hline
$0.30$ & - & $1.1384$ & $4.3650$ & $3.9086$ & $1.1168$ & $2.9830$ & $2.4384$ & $1.2233$ \\ \hline
$0.60$ & - & - & $2.1498$ & $1.9163$ & $1.1218$ & $1.9163$ & $1.6000$ & $1.1977$ \\ \hline
$0.40$ & - & $1.0337$ & $4.7457$ & $4.1241$ & $1.1507$ & $3.2337$ & $2.4337$ & $1.3287$ \\ \hline
$0.70$ & - & - & $2.5428$ & $2.2040$ & $1.1537$ & $2.2040$ & $1.7000$ & $1.2965$ \\ \hline
$0.50$ & - & $0.9186$ & $5.1643$ & $4.3742$ & $1.1806$ & $3.5303$ & $2.4186$ & $1.4597$ \\ \hline
$0.80$ & - & - & $3.0876$ & $2.6094$ & $1.1832$ & $2.6094$ & $1.8000$ & $1.4497$ \\ \hline
$0.90$ & - & - & $3.9723$ & $3.3026$ & $1.2028$ & $3.3026$ & $1.9000$ & $1.7382$ \\ \hline
$0.92$ & - & - & $4.2437$ & $3.5257$ & $1.2036$ & $3.5257$ & $1.9200$ & $1.8960$ \\ \hline
$0.60$ & - & $0.7884$ & $5.6201$ & $4.6643$ & $1.2049$ & $3.8873$ & $2.3884$ & $1.6276$ \\ \hline
$0.70$ & - & $0.6344$ & $6.0920$ & $4.9894$ & $1.2210$ & $4.3186$ & $2.3344$ & $1.8500$ \\ \hline
$0.71$ & $0.7043$ & $0.5922$ & $6.1372$ & $5.0223$ & $1.2220$ & $4.3656$ & $2.3273$ & $1.8758$ \\ \hline
$0.75$ & $0.7062$ & $0.3623$ & $6.3196$ & $5.1599$ & $1.2247$ & $4.5538$ & $2.3072$ & $1.9737$ \\ \hline
$0.76$ & $0.7063$ & $0.3059$ & $6.3639$ & $5.1944$ & $1.2252$ & $4.5985$ & $2.3044$ & $1.9955$ \\ \hline
$0.77$ & $0.7064$ & $0.2485$ & $6.4055$ & $5.2270$ & $1.2255$ & $4.6404$ & $2.3023$ & $2.0156$ \\ \hline
$0.78$ & $0.7064$ & $0.1949$ & $6.4443$ & $5.2578$ & $1.2257$ & $4.6789$ & $2.3010$ & $2.0334$ \\ \hline
$0.79$ & $0.7065$ & $0.1401$ & $6.4751$ & $5.2823$ & $1.2258$ & $4.7105$ & $2.2999$ & $2.0481$ \\ \hline
$0.80$ & $0.7065$ & $0.0855$ & $6.4996$ & $5.3020$ & $1.2259$ & $4.7352$ & $2.2995$ & $2.0592$ \\ \hline
$0.81$ & $0.7065$ & $0.0312$ & $6.5149$ & $5.3154$ & $1.2259$ & $4.7502$ & $2.2994$ & $2.0658$ \\ \hline
$0.8157$ & $0.7066$ & - & $6.5168$ & $5.3160$ & $1.2259$ & $4.7521$ & $2.2995$ & $2.0666$ \\ \hline
\end{tabular}

\vspace{10pt}

\caption{\label{tab:basic_result} Results for various values of $y$, $v$, and $z$}
\end{table}

\subsection{Multiple Long Jobs}
\label{sec:multiple}

We extend the basic scenario by allowing multiple long jobs. First we require that every long job starts at its submission time. If long job $J_j$ interrupts long job $J_k$ then $r_k< r_j$ and $C_j > C_k$ holds. Therefore, we use the name nested long jobs.

As already mentioned, each long job establishes a segment if it starts at its submission time. We combine a segment $S$ and a basic scenario $B$ with the corresponding long jobs $J_s$ and $J_b$. Using our previous notation, $S$ and $B$ have total weighted completion times $C_S$ and $C_B$ in the WSRPT schedule, total weighted completion times $C_S^*$ and $C_B^*$ in the optimal schedule, weights $W_S$ and $W_B$, and lengths $L_S$ and $L_B$, respectively. All these values assume normalization of the corresponding long jobs, that is, the relations $p_s=p_b=1$, $w_s=w_s=1$, and $r_s=r_b=0$ hold for the long jobs. Therefore, we must apply some correction depending on the actual values of the long jobs when using these values in a scenario. Furthermore, the parameters for basic scenario $B$ ignore the impact of segment $S$.  We normalize the first long job $J_b$ resulting in $r_{b}=0$ and $p_{b}=w_{b}=1$, see also Section~\ref{sec:basic1}. 

First we assume that segment $S$ is also a basic scenario.

\begin{lemma}
\label{lem:two_bs}
The nesting of a basic scenario within another basic scenario may produce a larger performance ratio than $R_B$.
\end{lemma}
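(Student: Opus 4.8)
The lemma only asserts \emph{existence}, so the plan is to write down one explicit nested equality instance and to verify, by the kind of direct (numerical) evaluation used for Lemma~\ref{lem:basic1}, that its performance ratio exceeds $R_B=1.2259$.

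The instance I would build is an outer basic scenario $B$ with its long job $J_b$ normalized ($r_b=0$, $p_b=w_b=1$) into which a scaled inner basic scenario $S$ is inserted. Concretely, at a submission time $r_s$ inside the execution window of $J_b$ I would introduce a long job $J_s$ with processing time $p_s$ and weight $w_s=p_s/(1-r_s)$ --- this is exactly the value that makes the weight over processing time ratio of $J_s$ equal to the Smith ratio $1/(1-r_s)$ of $J_b$ at time $r_s$, as required for an equality instance --- together with the retinue of small jobs of a basic scenario, with processing times scaled by $p_s$ and weights scaled by $p_s/(1-r_s)$ so that each of these small jobs matches the Smith ratio of $J_s$ during its execution (the same computation as in Section~\ref{sec:basic1}). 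Placing $r_s$ after all small jobs of $B$ have been submitted (i.e.\ $r_s>y_B$, where $1/(1-y_B)$ is the largest weight over processing time ratio of $B$) keeps the instance an equality instance and makes the combined WSRPT schedule transparent: $J_b$ runs on $[0,r_s]$, then the segment $S$ runs as a contiguous block of length $p_sL_S$ ($J_s$ without preemption, then $S$'s small jobs), then $J_b$ resumes and finishes at time $1+p_sL_S$, then the small jobs of $B$ run, each shifted by $p_sL_S$ relative to the standalone schedule of $B$. Summing contributions and using the normalized quantities $C_B,C_B^{*},W_B,L_B$ of $B$ and $C_S,C_S^{*},W_S,L_S$ of $S$ gives
\begin{align*}
C & = C_B + W_B\,p_sL_S + \frac{p_sr_s}{1-r_s}\,W_S + \frac{p_s^{2}}{1-r_s}\,C_S ,
\end{align*}
and a corresponding expression (or upper bound) for $C^{*}$ obtained by describing the optimal schedule of the combined instance: the small jobs of $S$ have the highest weight over processing time ratios and run right after $r_s$, then $J_s$, then the remaining small jobs of $B$, and finally $J_b$.

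Then I would substitute the tabulated values and optimize numerically over $r_s$ and $p_s$ (and, should the straightforward choice $B=S=$ worst basic scenario only reach $R_B$ because the linear cross-terms enter $C$ and $C^{*}$ with equal coefficients, also over the parameters of $B$ and $S$, or move the insertion time $r_s$ into the interior of $B$'s retinue so that the high-ratio small jobs of $S$ displace small jobs of $B$ in the optimal schedule) until $C/C^{*}$ is strictly above $1.2259$. Because the claim is only an existence statement, one explicit choice of parameters with a verified inequality suffices.

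The main obstacle is getting the optimal schedule of the combined instance right, i.e.\ a sufficiently tight handle on $C^{*}$. The naive estimate that replaces the optimal schedule by ``run the optimal schedule of $B$, splicing in an optimal schedule of $S$ at time $r_s$'' delays every job of $B$ after time $r_s$ by the full length $p_sL_S$ and therefore contributes the same linear cross-term to $C$ and to $C^{*}$; since $C_B/C_B^{*}=C_S/C_S^{*}=R_B$, that blend never exceeds $R_B$. Proving the lemma therefore requires either exploiting that the true optimal schedule pays strictly less than this estimate (because $S$'s high-ratio small jobs can be packed more efficiently against $B$'s schedule than as one block), or tuning the parameters so the combination genuinely beats both pieces; pinning down this efficient packing --- and with it the exact $C^{*}$ --- is the real work, and it is precisely the bookkeeping that the subsequent lemmas on nested and delayed long jobs are designed to handle.
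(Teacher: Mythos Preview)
Your proposal is not a proof but a plan that ends at an acknowledged obstacle, and your diagnosis of that obstacle is where the real gap lies. You assert that in the naive spliced estimate the linear cross-term $p_sL_S\cdot(\text{weight delayed})$ is the same in $C$ and in $C^*$; it is not. In the WSRPT schedule the small jobs of $B$ run \emph{after} $J_b$, so every one of them is shifted by $p_sL_S$ when $S$ is inserted. In the optimal schedule the small jobs of $B$ that start at their submission times run \emph{before} $r_s$ and are not shifted at all --- only $J_b$ is. The two $p_sL_S$-terms therefore differ by exactly the weight of those early small jobs, and this delay asymmetry is what pushes the nested ratio above $R_B$, not some additional packing efficiency of $S$'s small jobs against $B$'s schedule.

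The paper's proof exploits precisely this and is far simpler than your plan. It does \emph{not} take $B$ to be a worst-case basic scenario with groups; the outer instance is just $J_b$ together with the small jobs submitted in $[0,r_s)$ at their maximal admissible weights (total small-job weight $-\ln(1-r_s)$), nothing more. With $S$ the worst-case basic scenario one gets the closed form
\[
\frac{C}{C^*}=\frac{1+r_s\bigl(1-\ln(1-r_s)\bigr)+w_sp_sC_S+r_sw_sW_S+p_sL_S\bigl(1-\ln(1-r_s)\bigr)}{1-\ln(1-r_s)+w_sp_sC_S^*+r_sw_sW_S+p_sL_S}\,,
\]
and since $C_S/C_S^*=R_B$ it suffices that the remaining quotient exceed $R_B$. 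Letting $p_s\to\infty$ this reduces to
\[
R_B<1+\frac{-(1-r_s)\ln(1-r_s)}{r_s\,W_S/L_S+(1-r_s)}\,,
\]
which holds at $r_s\approx0.53$ using $W_S/L_S\approx2.07$ from Eqs.~\eqref{eq:weight}--\eqref{eq:length}; the paper checks that $p_s>35$ already suffices. So the missing idea is not finer bookkeeping of $C^*$ but recognizing the delay asymmetry and choosing the outer instance (and $r_s$) so that it bites.
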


\begin{proof}
Let $S$ be a basic scenario with the worst performance ratio $R_B$. We use $r_s=0.5307$ and assume $r_s+p_s\cdot L_S > v + \Delta_v = y\cdot (1-v)/(1-y)$, see Section~\ref{sec:basic1} for the definition of $y$ and $v$. Due to Lemma~\ref{lem:group}, no small job 
executes after basic scenario $S$ in the optimal schedule. Then we have  
\begin{align}
\label{eq:combined_ratio}
\frac{C}{C^*} & = \frac{1+r_s\cdot (1- \ln (1-r_s))+ w_s \cdot p_s\cdot C_S+ r_s\cdot w_s\cdot W_S + p_s\cdot L_S\cdot (1-\ln (1-r_s))}{1-\ln (1-r_s)+w_s \cdot p_s\cdot C_S^*+ r_s\cdot w_s\cdot W_S+ p_s\cdot L_S}.
\end{align}

This nesting of two basic scenarios has a larger performance ratio than $R_B$ for 
\begin{align*}
R_B & <  \frac{1+r_s\cdot (1- \ln (1-r_s))+ r_s\cdot w_s\cdot W_S + p_s\cdot L_S\cdot (1-\ln (1-r_s))}{(1-\ln (1-r_s))+r_s\cdot w_s\cdot W_S+ p_s\cdot L_S} =  1 + \frac{\frac{1-(1-r_s)\cdot (1- \ln (1-r_s))}{p_s\cdot L_S}- \ln (1-r_s)}{
r_s\cdot \frac{w_s}{p_s}\cdot \frac{W_S}{L_S}+ \frac{1- \ln (1-r_s)}{p_s\cdot L_S}+1} 
\\
& = 1 + \frac{(1-r_s) \cdot \left(\frac{r_s}{p_s\cdot L_S}-\left(\frac{1-r_s}{p_s\cdot L_S}+1 \right)\cdot \ln (1-r_s) \right)}{r_s\cdot \frac{W_S}{L_S}+ (1-r_s)\cdot \left(\frac{1-\ln (1-r_s)}{p_s\cdot L_S} + 1\right)}.
\end{align*}
The inequality holds for $p_s > 35$.
\end{proof}
We obtain the instance with the worst performance ratio by applying Lemma~\ref{lem:group} to all small jobs executing after $y$ in the optimal schedule. An optimization produces values for $r_s$, $p_s$, $y$, and $v$. While we obtain $r_s=0.5307$, we cannot detect a numerical change for $y$ and $v$ compared to Lemma~\ref{lem:basic1}. Therefore, the worst performance ratio is still $1.2259$. Since the performance ratio does not change, there is a wide numerical range of possible values for $p_s$ although the evaluation of Eq.~\eqref{eq:combined_ratio} in the proof of Lemma~\ref{lem:two_bs} yields exactly one optimal value of $p_s$ if all other parameters are fixed. We use $R$ to denote the worst performance ratio obtained from these two basic scenarios although $R$ and $R_B$ have the same numerical representation with the used accuracy.   

Next, we show that no WSRPT schedule can exceed the performance ratio $R$. To this end, we apply induction in the number of long jobs. Lemma~\ref{lem:basic1} and \ref{lem:two_bs} provide the induction base. We use the restriction of the instance space as established in Sections~\ref{sec:WSRPT} and \ref{sec:basic1}.

\begin{lemma}
\label{lem:nested_jobs}
The performance of a WSRPT schedule cannot exceed $R$ if every long job starts at its submission time.
\end{lemma}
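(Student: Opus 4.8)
The plan is to argue by induction on the number $m$ of long jobs in the WSRPT schedule, working throughout in the restricted instance space fixed in Sections~\ref{sec:WSRPT} and \ref{sec:basic1} (equality instances, single makespan, no leading idle time, splitting transformation, and the basic-scenario reductions of Lemmas~\ref{lem:non-preempt}--\ref{lem:group}). The base cases are already available: $m=1$ is Lemma~\ref{lem:basic1}, which gives performance ratio at most $R_B\le R$, and $m=2$ is the configuration of an outermost long job together with one nested long job whose segment is, after the Section~\ref{sec:basic1} reductions, a basic scenario; this is exactly Eq.~\eqref{eq:combined_ratio}, optimized to $R$ in Lemma~\ref{lem:two_bs}.

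For the inductive step with $m\ge 3$ long jobs, I would first single out the outermost long job $J_b$ -- the unique long job whose segment contains the whole instance -- and normalize $r_b=0$, $p_b=w_b=1$. Since segments are nested, the long jobs interrupting $J_b$ head pairwise non-overlapping sibling segments $S_1,\dots,S_\ell$, and the rest of the instance consists of the small jobs attached to $J_b$ and the ``group'' jobs executing after $J_b$ in the optimal schedule, exactly as in the basic scenario of Section~\ref{sec:basic1}. Each $S_i$, after re-normalizing its own long job, is a WSRPT schedule obeying the hypothesis of the lemma with strictly fewer than $m$ long jobs, so the induction hypothesis yields $C_{S_i}/C_{S_i}^{*}\le R$. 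In parallel I would establish, again by the same induction together with the monotonicity computations summarized in Table~\ref{tab:basic_result}, a uniform bound $W_{S_i}/L_{S_i}\le W/L|_{\max}$ on the weight-to-length ratio of each sibling segment, the maximum being attained by the basic-scenario shape.

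The second part of the step collapses the configuration to the one already analysed. Two sub-reductions are needed: (i) reduce several sibling segments $S_1,\dots,S_\ell$ to a single one, by showing that transferring weight and length between them or deleting all but the most damaging one cannot decrease the overall ratio; and (ii) re-apply the instance-space transformations of Sections~\ref{sec:WSRPT} and \ref{sec:basic1} inside the surviving segment $S$ relative to the enclosing long job $J_b$. As in the proofs of Lemma~\ref{lem:reduction} and Lemmas~\ref{lem:non-preempt}--\ref{lem:group}, these transformations leave the WSRPT schedule of $S$ unchanged and do not improve the performance ratio, while Lemma~\ref{lem:group} places $S$ at the very end of the optimal schedule. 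After these reductions the instance is precisely a basic scenario $B$ with one nested segment $S$ satisfying $C_S/C_S^{*}\le R$ and the $W_S/L_S$ bound, so Eq.~\eqref{eq:combined_ratio} applies; pushing $C_S/C_S^{*}$ and $W_S/L_S$ to their worst admissible values reproduces the Lemma~\ref{lem:two_bs} configuration, whose optimization over $r_s$, $p_s$, $y$, $v$ gives at most $R$.

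The main obstacle I expect is sub-reduction (i) together with the monotonicity of Eq.~\eqref{eq:combined_ratio} in the segment parameters: one must verify that the combined ratio is non-decreasing in the per-segment damage $C_S/C_S^{*}$ and in $W_S/L_S$ while the constraints linking these quantities to $r_s$ and $p_s$ are respected, so that driving each sibling segment to its worst shape and then merging is legitimate. Establishing the uniform $W_{S_i}/L_{S_i}$ bound and carefully tracking where the sibling-segment jobs and the group jobs of $J_b$ sit in the optimal schedule -- with the surviving segment at the end by Lemma~\ref{lem:group} -- are the delicate bookkeeping points underlying the argument.
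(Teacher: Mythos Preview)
Your induction scaffold matches the paper's, and the base cases are right. The genuine gap is in how you use $W_S/L_S$.

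You propose to prove an \emph{upper} bound $W_{S}/L_{S}\le W/L|_{\max}$ and then ``push $C_S/C_S^{*}$ and $W_S/L_S$ to their worst admissible values'', expecting the combined ratio to be non-decreasing in both. But look at Eq.~\eqref{eq:cond}: the right-hand side is \emph{decreasing} in $W_S/L_S$ (it sits in the denominator of $f$). So an upper bound on $W_S/L_S$ is useless for an upper bound on the combined ratio; you would need a lower bound, and there is no uniform one (a bare long job has $W_S/L_S=1$). Consequently the two parameters cannot be decoupled and pushed independently as you suggest; the Lemma~\ref{lem:two_bs} configuration has $C_S/C_S^{*}=R_B$ \emph{together with} $W_S/L_S\approx 2.067$ because $S$ there is a single basic scenario, not because these are separate extremal choices.

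The paper's inductive step runs on the opposite idea, a \emph{trade-off} rather than two one-sided bounds. After reducing to $W_{B+}=0$ (no small jobs of $B$ after $S$, by shifting $r_s$ later), Eq.~\eqref{eq:cond} is the function $f$ of Section~\ref{sec:analysis}; maximizing over $r_s$ shows that $C/C^{*}>R$ would force $W_S/L_S\le 2.08$. On the other hand, if $S$ itself attains ratio close to $R$, then by the induction hypothesis and Lemma~\ref{lem:two_bs} it must already contain a worst-case basic scenario, and the paper observes that embedding such a scenario inside another at $r_s\approx 0.53$ raises the weight-to-length ratio of $S$ above $2.08$. The two constraints are incompatible, which closes the induction. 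Your proposal does not contain this coupling argument; without it the step does not go through. The paper also splits off the case of a \emph{short} inner segment, $p_sL_S\le \Delta_{r_s}=(1-Rr_s)/(R-1)$, and handles it by direct comparison with the group structure of Lemma~\ref{lem:group}; your plan has no analogue of this case. Your sub-reduction~(i) for several sibling segments is additional work the paper avoids via the $W_{B+}$ argument that pushes $S$ to the end of $B$.
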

The proof of Lemma~\ref{lem:nested_jobs} uses concepts that we have applied in other proofs.  

\begin{proof}
We consider the WSRPT schedule of a basic scenario $B$ and introduce a segment $S$. The performance ratio $C_S/C_S^*$ of this segment is at most $R$. We use the notations described at the beginning of this section. 

The claim holds for a concatenation of $S$ after $B$ since the delay of $S$ causes a reduction of the performance ratio of $S$ without affecting the performance ratio of $B$.   
 
Therefore, we assume that segment $S$ executes inside basic scenario $B$, that is, long job $J_s$ interrupts long job $J_b$. Lemmas ~\ref{lem:max_weight} and \ref{lem:group} hold for the whole WSRPT schedule while there may be preemption of long jobs in the optimal schedule.

We show that it is sufficient to consider basic scenarios $B$ without any submission of jobs after segment $S$. A larger performance ratio than $R$ for $C/C^*$ requires 
\begin{align}
\label{eq:cond}
R & < 1 - \frac{(1-r_s)\cdot \ln (1-r_s)}{r_s\cdot \frac{W_S}{L_S}+ (1-r_s)\cdot (1+W_{B+})},
\end{align}
see the proof of Lemma~\ref{lem:two_bs} if we ignore the contribution of $C_B$ and $C_B^*$ and assume that there is a submission of some jobs of basic scenario $B$ after segment $S$ with total weight $W_{B+}$. Then increasing the submission time $r_s$ of long job $J_s$ by $\delta$ results in an increase of the right hand side of Eq.~\eqref{eq:cond} since we have
\begin{align*}
 \frac{(r_s+\delta)\cdot  p_s \cdot W_S}{1-r_s-\delta} - \frac{r_s\cdot  p_s \cdot W_S}{1-r_s} -\frac{\delta\cdot p_s\cdot L_S}{1-r_s-\delta} & = \frac{\delta\cdot p_s}{1-r_s-\delta)}\cdot \left( \frac{W_S}{1-r_s}- L_S \right) >0.
\end{align*} 
Therefore, we can assume that there is no small job submission after the segment.
  
We assume that $p_s\cdot L_S \leq (1-R\cdot r_s)/(R-1)= \Delta_{r_s}$ and $C_S/C_S^*=R$ hold. We compare this combination of $S$ and $B$ with the worst case basic scenario, see Lemma~\ref{lem:group}. The worst case basic scenario delays each small job of $B$ by at least the same amount and its average contribution of its additional small jobs starting in interval $[x, \Delta_x)$ of the optimal schedule is larger than $R$. Therefore, this combination of $S$ and $B$ cannot exceed performance ratio $R$. 

Next, we consider $p_s\cdot L_S > (1-R\cdot r_s)/(R-1)$. For $r_s+p_s\cdot L_S < v +\Delta_v = y\cdot (1-v)/(1-y)$, we extend $p_s$ until $r_s+p_s\cdot L_S = v +\Delta_v = y\cdot (1-v)/(1-y)$ holds and remove all small jobs of basic scenario $B$ that execute after segment $S$ in the optimal schedule. Since all removed jobs contribute $R$ to the performance ratio, the removal has no impact. The increase of $p_s$ can delay small jobs with at most a total weight of $-(\ln (1-r_s))+\ln (1-v)< -\ln (1-r_s)$. Therefore, the contribution is less than the contribution for a further extension of $p_s$. Therefore, we assume $r_s+p_s\cdot L_S \geq y\cdot (1-v)/(1-y)$. Since no small job completes after the segment in the optimal schedule, we have $C_B/C_B^*=r_s +1/(1-\ln (1-r_s))$, see the proof of Lemma~\ref{lem:two_bs}. Since $W_{B+}=0$ holds,  Eq.~\eqref{eq:cond} includes function $f(x)$ from Section~\ref{sec:analysis}. The analysis of this function yields that the right hand side of Eq.~\eqref{eq:cond} has its maximum for $r_s=0.53$ and that it is not possible to satisfy Eq.~\eqref{eq:cond} for $W_S/L_S > 2.08$. Note that $r_s=0.53$ increases the original $W_S/L_S$ ratio of a segment by $1.88$ and the ratio $W_S/L_S$ for the basic scenario with the worst performance ratio is $2.067$, see Section~\ref{sec:basic1}. Therefore, segment $S$ either cannot contain two basic scenarios or its performance ratio cannot exceed $R$, see Lemma~\ref{lem:two_bs}.
\end{proof}

Finally, we allow the introduction of a long job $J_s$ with delay into a WSRPT schedule, that is, the long job does not start at its submission time. For the purpose of continuing with our notation, we temporarily assume that job $J_s$ starts at its submission time and establishes a segment $S$. Note that this assumption violates the condition of an equality instance. Since we use this assumption only for notation purposes, this violation does not matter. In the optimal schedule, job $J_s$ may not start without preemption after all the other jobs of the segment $S$ as shown in Lemma~\ref{lem:non-preempt} for a long job starting at its submission time but it may start earlier or use preemption. 

\begin{lemma}
\label{lem:delayed_jobs}
We assume an introduction of a long job that does not start at its submission time together with additional jobs described by using the segment notation $S$ into a WSRPT schedule. This introduction cannot increase the performance ratio of the WSRPT schedule beyond $R$.  
\end{lemma}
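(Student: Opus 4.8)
The plan is to reduce the \emph{delayed long job} case to the \emph{nested long jobs} case already handled in Lemma~\ref{lem:nested_jobs}, so that the desired bound $R$ follows by comparison. First I would fix the WSRPT schedule that results from the introduction of the delayed long job $J_s$ (together with the accompanying jobs of $S$) and look at the time $t_s$ at which $J_s$ actually begins execution in the WSRPT schedule. Since $J_s$ does not start at $r_s$, some job with a strictly larger Smith ratio occupies the machine on $[r_s,t_s)$; because the instance is an equality instance this means the running job at $r_s$ has the same weight over processing time ratio as $J_s$ would need, and by the segment construction of Section~\ref{sec:WSRPT} the whole delay interval lies inside a single enclosing segment. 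The key observation is that delaying the start of $J_s$ only \emph{postpones} everything in $S$ relative to where it would sit if $J_s$ started at $r_s$; as in the concatenation argument inside the proof of Lemma~\ref{lem:nested_jobs}, pushing an entire segment later multiplies its WSRPT contribution and its optimal contribution by the same leading factor and therefore cannot raise the ratio $C_S/C_S^\ast$ above $R$.

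The main step is then to bound the interaction of $S$ with the jobs it is nested inside. I would introduce the fictitious picture in which $J_s$ is pretended to start at $r_s$ and forms a genuine segment, exactly as the statement suggests; in that picture Lemma~\ref{lem:nested_jobs} already gives that the performance ratio is at most $R$. Moving from that fictitious schedule to the true WSRPT schedule amounts to replacing the segment $S$, placed starting at $r_s$, by the same segment placed starting later (at $t_s$), while the surrounding long job(s) fill the gap $[r_s,t_s)$ with work they would in any case have to do. In the optimal schedule, Lemma~\ref{lem:non-preempt} no longer forces $J_s$ to run last without preemption, but any earlier or preemptive placement of $J_s$ only \emph{decreases} $C^\ast$-contributions relative to the worst case already analyzed, so the optimal side cannot move against us. Concretely I would write $C/C^\ast$ using the same decomposition as in Eq.~\eqref{eq:combined_ratio} and Eq.~\eqref{eq:cond}, with $r_s$ replaced by the effective start $t_s$ of $S$ in the WSRPT schedule, and then invoke the monotonicity computation from the proof of Lemma~\ref{lem:nested_jobs} (the displayed inequality showing that increasing the effective submission time of a segment increases the right-hand side of Eq.~\eqref{eq:cond}) to conclude that the delayed configuration is dominated by a nested configuration with a larger effective start time, which Lemma~\ref{lem:nested_jobs} bounds by $R$.

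The hard part will be handling the case where the delay of $J_s$ is caused by small jobs of the \emph{outer} basic scenario that are submitted during $[r_s,t_s)$: here the "running job" on that interval is itself long, but it may be interrupted repeatedly by the outer scenario's small jobs, so the enclosing segment structure is more delicate than a single long job interrupting another. I would argue that, because we are in an equality instance and those small jobs have the same Smith ratio as the long job they interrupt, they belong to the same outer segment and their presence only delays $S$ further — which, by the same monotonicity as above, is in our favor. A secondary subtlety is the optimal schedule: if the adversary preempts $J_s$ in the optimal schedule, I would show that splitting $J_s$ into the part before and after a preemption, and comparing against the non-preemptive worst case via Lemma~\ref{lem:non-preempt} applied piecewise, still yields a contribution no larger than what the nested analysis assumes. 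Once these two points are settled, the remaining estimates are exactly the $f(x)$-analysis and the threshold $W_S/L_S \le 2.08$ already recorded in the proof of Lemma~\ref{lem:nested_jobs}, so no new numerical optimization is needed.

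\begin{proof}
[sketch — to be completed]
Let $t_s \ge r_s$ be the start time of $J_s$ in the WSRPT schedule. On $[r_s,t_s)$ the machine runs jobs of strictly larger Smith ratio; in an equality instance these, together with the job running at $r_s$, lie in a single enclosing segment, and the jobs of $S$ are postponed by $t_s-r_s$ relative to the configuration in which $J_s$ starts at $r_s$. Treating $J_s$ as the long job of a segment $S$ with effective start $t_s$, the performance ratio of the combined instance has the form of Eq.~\eqref{eq:combined_ratio} with $r_s$ replaced by $t_s$; by the monotonicity inequality in the proof of Lemma~\ref{lem:nested_jobs},
\begin{align*}
\frac{(t_s+\delta)\cdot p_s\cdot W_S}{1-t_s-\delta} - \frac{t_s\cdot p_s\cdot W_S}{1-t_s} - \frac{\delta\cdot p_s\cdot L_S}{1-t_s-\delta} & = \frac{\delta\cdot p_s}{1-t_s-\delta}\cdot\left(\frac{W_S}{1-t_s}-L_S\right) > 0,
\end{align*}
so a larger effective start time only raises the bound on $C/C^\ast$ coming from Eq.~\eqref{eq:cond}, and Lemma~\ref{lem:nested_jobs} already caps that bound at $R$. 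Finally, if the optimal schedule starts $J_s$ early or preempts it, the contribution of $J_s$ to $C^\ast$ can only decrease relative to the non-preemptive placement analyzed in Lemma~\ref{lem:non-preempt}, applied piecewise to the pieces of $J_s$; hence $C/C^\ast \le R$.
\end{proof}
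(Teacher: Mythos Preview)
Your reduction runs in the wrong direction. You compare the actual delayed instance with a fictitious nested instance in which $J_s$ has submission time $t_s$; the WSRPT schedules coincide, so $C$ is the same, but in the delayed instance the optimal schedule may start $J_s$ anywhere in $[r_s,t_s)$, whereas in the fictitious instance it cannot. Hence the delayed instance has $C^\ast \le C^\ast_{\text{fict}}$, and therefore $C/C^\ast \ge C_{\text{fict}}/C^\ast_{\text{fict}}$. The bound $R$ from Lemma~\ref{lem:nested_jobs} on the fictitious instance gives you only a \emph{lower} bound on the ratio you want to control. The monotonicity inequality you quote from the proof of Lemma~\ref{lem:nested_jobs} has the same defect here: it says that pushing the start of $S$ later makes the right-hand side of Eq.~\eqref{eq:cond} larger, i.e.\ makes the ``bad'' inequality $R<\cdots$ easier to satisfy, which is again the wrong direction for an upper bound. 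Your final sentence repeats the error: you correctly observe that an earlier or preemptive placement of $J_s$ in the optimal schedule can only decrease $C^\ast$, and then conclude $C/C^\ast\le R$; but a smaller $C^\ast$ increases the ratio.

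The paper does not try to dominate the delayed case by a nested case. It uses Lemma~\ref{lem:group} structurally: in the worst-case instance every small job has completion-time ratio equal to the overall performance ratio, so in particular the job that completes just before $J_s$ starts has this property, and therefore the ratio of the start time of $J_s$ in the WSRPT schedule to its start time in the optimal schedule is exactly the performance ratio. With the start times in ratio $R$, it then suffices that the \emph{internal} performance ratio of $S$ is at most $R$; Lemma~\ref{lem:non-preempt} is invoked to say that the worst internal ratio of a segment requires the long job to run last and non-preemptively in the optimal schedule, which fails in the delayed configuration, so the internal ratio of $S$ is strictly below $R$. That is the step you are missing: you need the start-time ratio from Lemma~\ref{lem:group}, not a monotonicity comparison, to neutralize the extra freedom the optimal schedule gains from $r_s<t_s$.
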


\begin{proof}
Due to Lemma~\ref{lem:group}, the ratio of the start time of job $J_s$ in the WSRPT schedule to the start time of job $J_s$ in the optimal schedule is the performance ratio. Then the WSRPT schedule maintains performance ratio $R$ if the performance ratio of the segment $S$ within the WSRPT schedule is $R$ as well. However, segment $S$ has a smaller performance ratio since job $J_s$ does not start after all jobs in the segment without preemption, see Lemma~\ref{lem:non-preempt}.
\end{proof} 

Formally, we combine our results into a theorem:
\begin{theorem}
\label{thm:WSRPT}
For total weighted completion time scheduling, algorithm WSRPT has a competitive ratio of $R=1.2259$. This competitive ratio is tight for this algorithm.
\end{theorem}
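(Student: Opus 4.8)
The plan is to assemble the theorem from the structural reductions of Section~\ref{sec:WSRPT} and the scenario analysis of Section~\ref{sec:analysis}, proving the upper bound (no instance drives the performance ratio above $R$) and the matching lower bound (some family of instances drives it arbitrarily close to $R$) separately.

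For the upper bound, I would start from an arbitrary instance and first normalize it as in Section~\ref{sec:WSRPT}: restrict to a single makespan by partitioning at any later submission, and shift away leading idle time, neither of which decreases the performance ratio. Lemma~\ref{lem:reduction} then replaces the instance by an equality instance with at least the same performance ratio, and the splitting transformation yields the long/small job structure described at the end of Section~\ref{sec:WSRPT}. Next I would case-split on the long jobs: with no long job the WSRPT schedule is optimal, so the ratio is $1<R$; with exactly one long job the instance is a basic scenario and Lemma~\ref{lem:basic1} gives at most $R_B=R$; with several long jobs all starting at their submission times Lemma~\ref{lem:nested_jobs} gives at most $R$; and if some long job does not start at its submission time Lemma~\ref{lem:delayed_jobs} gives at most $R$. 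These four cases are exhaustive for equality instances with a single makespan, so WSRPT is $R$-competitive.

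For the lower bound, I would exhibit the worst-case instance already identified in the analysis: the basic scenario of Lemma~\ref{lem:basic1} with $y=0.8157$ and $v=0.7066$ (or, equivalently, its nesting inside another basic scenario with $r_s=0.5307$ from Lemma~\ref{lem:two_bs}), realized with small-job processing time $\delta$ and the group refinement of Lemma~\ref{lem:group}. As $\delta\to 0$ its performance ratio tends to $1.2259=R$, so for every $\varepsilon>0$ there is a finite instance on which the WSRPT cost exceeds $(R-\varepsilon)$ times the optimum; hence WSRPT is not $c$-competitive for any $c<R$. Together with the upper bound this pins the competitive ratio at exactly $R=1.2259$ and shows that it is tight.

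Since all of the substantive work is carried by the preceding lemmas, the theorem itself has no single hard step; the two points that need care are verifying that every transformation in the chain (Lemma~\ref{lem:reduction}, the splitting transformation, and the scenario-specific reductions) moves the performance ratio in the same, non-decreasing direction, so that the bound established on the reduced instance space transfers to all instances --- which is exactly what those lemmas assert --- and checking that the supremum $R$ is approached in a limit ($\delta\to 0$ together with group refinement) rather than attained by one finite instance, which is harmless because the competitive ratio is a supremum over instances.
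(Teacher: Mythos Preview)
Your proposal is correct and mirrors the paper's own treatment: the theorem is stated immediately after Lemmas~\ref{lem:nested_jobs} and~\ref{lem:delayed_jobs} with the phrase ``Formally, we combine our results into a theorem,'' i.e., it is presented as a direct assembly of the preceding reductions and case analysis rather than with a fresh argument. Your case split (no long job / one long job / nested long jobs / delayed long jobs) and your use of the basic-scenario construction for tightness match exactly how the paper organizes Sections~\ref{sec:WSRPT} and~\ref{sec:analysis}.
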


In Fig.~\ref{fig:combined}, we show the already introduced continuous weight profile of the optimal schedule for the existing lower bound of the WSRPT algorithm (top), the basic scenario with the worst performance ratio (middle) and the combination of two basic scenarios generating the competitive ratio of the WSRPT algorithm (bottom).   

\begin{figure}[ht]
\centering
\includegraphics[width=10cm]{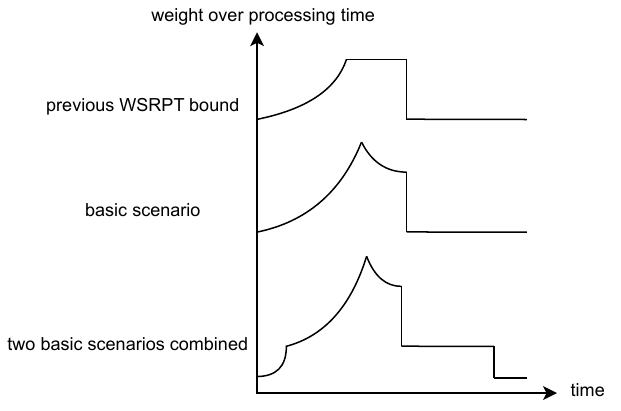}
\caption{\label{fig:combined} Different weight profiles for the lower bound example of \cite{XiC12} (top), the basic scenario with the worst performance ratio (middle), and the example using two basic scenarios and representing the competitive ratio (bottom)}
\end{figure}

\section{Lower Bound}
\label{sec:lb}
	
Although we have significantly improved the known competitive ratio of WSRPT and obtained the best known online algorithm for total weighted completion time scheduling on a single machine that allows preemption, there is still a gap to the best known lower bound $1.0730$ of the competitive ratio. Since the lower bound example of \cite{EpS03} does not use an equality instance, we use a similar approach with equality instances.   

\begin{theorem}
\label{thm:lb}
Every deterministic online algorithm for total weighted completion time scheduling with preemption on a single machine has at least competitive ratio of 1.1038.	\end{theorem}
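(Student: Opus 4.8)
The lower bound must hold against an arbitrary deterministic online algorithm $\mathcal{A}$, so the plan is to design an \emph{adaptive} adversary: a family of instances that share a common prefix, such that whatever $\mathcal{A}$ commits to on that prefix, the adversary can finish an instance of the family for which $\mathcal{A}$'s weighted completion time is at least $1.1038$ times the optimum. An adaptive argument is unavoidable here, since on any single fixed instance there is an online algorithm that behaves optimally by treating the as-yet-unrevealed suffix as known.

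Guided by the comparison after Theorem~\ref{thm:WSRPT}, I would mimic the worst-case WSRPT instance but collapse it to two submission times, $0$ and a parameter $t_1\in(0,1)$, and two weight over processing time ratios, so that the result is an equality instance. At time $0$ I release a long job $J_0$ with $p_0=w_0=1$ together with a batch of tiny jobs of the same ratio $1$, obtained from a single job of ratio $1$ by the splitting transformation of Section~\ref{sec:WSRPT}. During $[0,t_1]$ the online algorithm must divide the machine between $J_0$ and the tiny batch without seeing the future; I parametrise its behaviour by the amount $x\in[0,t_1]$ of $J_0$ it has completed by time $t_1$. At time $t_1$ the adversary branches on $x$. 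If $x$ is small, the adversary releases nothing more: $J_0$ then completes roughly as in the optimal schedule, but every tiny piece the algorithm postponed past $t_1$ is charged late while the optimum cleared the whole batch first, so $\mathcal{A}$ pays a penalty that grows with $x$ and vanishes only at $x=0$. If $x$ is large, the adversary releases at time $t_1$ a second long job $J_1$ together with a large high-ratio tiny batch, choosing the common ratio of these new jobs equal to the Smith ratio of the job $\mathcal{A}$ is currently running so that the instance stays an equality instance; now the processing $\mathcal{A}$ sank into $J_0$ before $t_1$ is exactly the machine time the optimal schedule uses to clear the urgent batch, and the remaining length of $J_0$ blocks that batch in $\mathcal{A}$'s schedule, inflating the ratio.

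The next step is to express the algorithm-side cost $\mathrm{ALG}(x)$ and the offline optimum $\mathrm{OPT}$ in each branch as explicit functions of $x$, of $t_1$, of the two batch lengths, and of $(p_1,w_1)$. Here one must allow the optimal offline schedule to preempt $J_0$ and $J_1$ and to interleave the two batches, in particular to decide how much of $J_0$ to pre-process before $t_1$; and one uses that applying the splitting transformation to a tiny batch lowers $\mathrm{ALG}$ and $\mathrm{OPT}$ by the same amount and hence raises their quotient. Because the first branch ratio increases in $x$ and the second decreases in $x$, the guarantee the adversary can force is $\min_{x}\max\{\rho_1(x),\rho_2(x)\}$, attained where the two curves cross; optimising the free parameters $t_1$, the batch lengths and $(p_1,w_1)$ numerically then produces the value $1.1038$.

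I expect the second, over-investment branch to be the main obstacle: one has to rule out that a cleverly chosen intermediate amount $x$ lets $\mathcal{A}$ beat the bound, which amounts to engineering the crossing point of $\rho_1$ and $\rho_2$ to sit exactly at the hedging level that maximises the minimum, simultaneously with the optimisation over $t_1$, the batch sizes and $(p_1,w_1)$. A secondary difficulty is the bookkeeping for $\mathrm{OPT}$ in each branch, since the optimal schedule is free to preempt the long jobs and to pre-process $J_0$; getting those offline schedules right is precisely what fixes the numerators and denominators of $\rho_1(x)$ and $\rho_2(x)$.
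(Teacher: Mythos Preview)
Your Branch~A does not create the penalty you claim. At time~$0$ every job you release has Smith ratio~$1$, so by Smith's rule \emph{any} non-preemptive order is optimal for $\sum w_jC_j$; in particular, the algorithm that simply runs $J_0$ first (so $x=t_1$) and, if nothing new arrives, finishes $J_0$ and then the tiny batch, is exactly optimal in Branch~A. Thus the ratio in Branch~A does not ``grow with $x$ and vanish only at $x=0$''; it is~$1$ at $x=t_1$ as well. Your two-branch trap therefore has an escape: commit fully to $J_0$. In Branch~A this is optimal, and since you assert Branch~B's ratio \emph{decreases} in $x$, the same commitment minimises the damage there too, so you have not shown that $\min_x\max\{\rho_1,\rho_2\}\ge 1.1038$. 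More generally, a single long job plus a cloud of equal-ratio tiny jobs at time~$0$ cannot force a dilemma before any second release, because the algorithm's choice among them is cost-neutral.

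The paper's construction is structurally different and avoids exactly this pitfall. It releases \emph{two long} jobs $J_1,J_2$ at time~$0$ with $p_1=w_1=1$ and $p_2=w_2\approx 2.3364$ (same ratio, different sizes) and no tiny jobs yet. The adversary then inspects the remaining processing times at time $p_1$ (and possibly waits until a later time $t_s\le p_2$ where the two Smith ratios coincide, or until $p_2$) and only then injects a single burst of tiny jobs whose weight-over-processing-time ratio matches the current Smith ratio of the long job the algorithm has been favouring. This yields three adversary branches, not two, and the asymmetry $p_1\ne p_2$ is what makes the algorithm's pre-$t_1$ decision genuinely costly: committing to the ``wrong'' long job cannot be undone, whereas in your setup committing to $J_0$ versus the tiny batch is reversible in cost terms. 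If you want to repair your argument, replace the time-$0$ tiny batch by a second long job of a different length and let the adversary choose both the release time and the ratio of the tiny burst adaptively; the calculation then follows the case split in the paper's proof.
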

	
\begin{proof}
The adversary submits two jobs $J_1$ and $J_2$ at time $t=0$ with $p_1=w_1=1$ and $p_2=w_2=2.3364$ and checks the Smith ratios at time $p_1=1$. If $p_2(p_1)=p_2-p_1$ holds then the adversary submits many identical small jobs with processing time $p=\delta$, weight $w= \frac{p_2}{p_2-p_1}\cdot\delta$, and total length $l_1=\sqrt{\frac{2p_2^3-2p_1^3}{p_2}}$ at time $p_1$. Afterwards, the adversary does not submit any further jobs. 

We obtain the competitive ratio
\begin{align*}
c_1 &= \frac{p^2_2 +\frac{p_2}{p_2-p_1}\cdot l_1 \cdot\left(p_2+\frac{l_1}{2}\right)+p_1(p_2+l_1+p_1)}{p_1^2 +\frac{p_2}{p_2-p_1}\cdot l_1 \cdot\left(p_1+\frac{l_1}{2}\right)+ p_2\cdot(p_2+l_1+p_1)}=1 + \frac{p_1\cdot l_1}{p_1^2 +\frac{p_2}{p_2-p_1}\cdot l_1 \cdot\left(p_1+\frac{l_1}{2}\right)+ p_2\cdot(p_2+l_1+p_1)}= 1.1038.
\end{align*} 

If $p_2/p_2(p_1) \geq p_1/p_1(p_1)$ holds then the adversary submits many small jobs with processing time $p=\delta$, weight $w= \frac{p_2}{p_2(p_1)}\cdot \delta$, and total length $l$ at time $p_1$. The adversary chooses $l$ in order to maximize the competitive ratio $c(p_1)$ and does not submit any further jobs and we obtain 
\begin{align*}
c(p_1) &\geq \frac{p_2\cdot (p_1+p_2(p_1)) +\frac{p_2}{p_2(p_1)}\cdot l \cdot\left(p_1+p_2(p_1)+\frac{l}{2}\right)+p_1\cdot(p_2+l_2+p_1)}{p_1^2 +\frac{p_2}{p_2(p_1)}\cdot l \cdot\left(p_1+\frac{l}{2}\right)+ p_2\cdot(p_2+l_2+p_1)}\\
&\geq 1 + \frac{p_1\cdot l+p_2\cdot (p_2(p_1)-(p_2-p_1))}{\frac{p_2}{p_2(p_1)}\cdot l\cdot\left(p_1+\frac{l}{2}\right)+p_2^2+p_2\cdot l_2+p_2\cdot p_1+p_1^2}\\
&\geq 1 + \frac{p_1\cdot l_1+p_2\cdot \left(p_2(p_1)-(p_2-p_1)\right)}{\frac{p_2}{p_2(p_1)}\cdot l_1\cdot\left(p_1+\frac{l_1}{2}\right)+p_2^2+p_2\cdot l_1+p_2\cdot p_1+p_1^2}\\
&\geq 1 + \frac{p_1\cdot l_1}{\frac{p_2}{p_2-p_1}\cdot l_1\cdot \left(p_1+\frac{l_1}{2}\right)+p_2^2+p_2\cdot l_1+p_2\cdot p_1+p_1^2}=c_1.
\end{align*} 

For $p_2/p_2(p_1) < p_1/p_1(p_1)$, the adversary waits until some time $t_s\leq p_2$ with $\frac{p_2}{p_2(t_s)}=\frac{p_1}{p_1(t_s)}$. At time $t_s=p_1+p_2-p_1(t_s)-p_2(t_s)$, the adversary submits many small jobs with processing time $p= \delta$, weight $w=\frac{p_2}{p_2(t_s)}\cdot\delta$, and total length $l$. The adversary chooses $l$ in order to maximize the performance ratio.

If the optimal algorithm starts with $J_1$ then we have
$$\sum w_j\cdot c_j (Opt)=p_1^2 +\frac{p_2}{p_2(t_s)}\cdot l \cdot\left(t_s+\frac{l}{2}\right)+ p_2\cdot(p_1+p_2+l)$$
otherwise we have
$$\sum w_j\cdot c_j (Opt)=p_2^2 +\frac{p_2}{p_2(t_s)}\cdot l \cdot\left(p_2+\frac{l}{2}\right)+p_1\cdot(p_2+l+p_1) .$$
We compare both functions of $t_s$ and obtain
\begin{align*}
&p_1^2 +\frac{p_2}{p_2(t_s)}\cdot l \cdot\left(t_s+\frac{l}{2}\right)+ p_2\cdot(p_1+p_2+l)\geq p_2^2 +\frac{p_2}{p_2(t_s)}\cdot l \cdot\left(p_2+\frac{l}{2}\right)+p_1\cdot(p_2+l+p_1) \; \; \Rightarrow t_s \geq \frac{1}{2}\cdot (p_1+p_2)
\end{align*}
Therefore, the optimal algorithm starts with job $J_1$ for $p_1\leq t_s\leq\frac{1}{2}\cdot(p_1+p_2)$ and we obtain 
\begin{align*}
c(t_s) &\geq \frac{p_2\cdot(t_s+p_2(t_s))+p_1\cdot(p_2+p_1)+\frac{p_2}{p_2(t_s)}\cdot l(t_s)\cdot\left(p_2+p_1+\frac{l(t_s)}{2}\right)}{p_1^2 +\frac{p_2}{p_2(t_s)}\cdot l(t_s) \cdot\left( t_s+\frac{l(t_s)}{2}\right)+ p_2\cdot(p_1+p_2+l(t_s))}\\
&\geq 1+\frac{p_1\cdot \left(p_2-p_2(t_s)+l(t_s)\right)}{p_1^2 + l(t_s) \cdot\left( \frac{p_1\cdot p_2}{p_2(t_s)}-p_1+\frac{p_2^2}{p_2(t_s)}+\frac{p_2\cdot l(t_s)}{2p_2(t_s)}\right)+ p_2\cdot p_1+p_2^2}\\
&\geq 1+\frac{p_1\cdot p_2}{p_1^2 + p_1\cdot p_2-p_1\cdot p_2(t_s)+p_2^2+\frac{p_2\cdot p_2(t_s)}{2} + p_2\cdot p_1+p_2^2} \\
&\geq 1+\frac{p_1\cdot p_2}{p_1^2 + 2p_1\cdot p_2+2p_2^2 +\frac{p_2^2 \cdot \left( \frac{p_2}{2}-p_1 \right) }{p_1+p_2}} = 1.1392 > 1.1038.			
\end{align*}
For the third inequality, we assume $l(t_s)=p_2(t_s)$. Then the performance ratio increases with decreasing $p_2(t_s)$. Therefore, we consider the maximum value of $p_2(t_s)=\frac{p_2^2}{p_1+p_2}$ and obtain the result.

For $\frac{1}{2}\cdot(p_1+p_2)< t_s\leq p_2$, the optimal algorithm starts with job $J_2$ and we obtain
\begin{align*}
c(t_s) &\geq \frac{p_2\cdot(t_s+p_2(t_s))+p_1\cdot(p_2+p_1)+\frac{p_2}{p_2(t_s)}\cdot l(t_s)\cdot\left(p_2+p_1+\frac{l(t_s)}{2}\right)}{p_2^2 +\frac{p_2}{p_2(t_s)}\cdot l(t_s) \cdot\left(p_2+\frac{l(t_s)}{2}\right)+ p_1\cdot(p_2+l(t_s)+p_1)}\\
&\geq 1+\frac{p_1\cdot \left(p_2-p_2(t_s)+\frac{p_2}{p_2(t_s)}\cdot l(t_s)-l(t_s)\right)}{p_2^2 +\frac{p_2}{p_2(t_s)}\cdot l(t_s) \cdot\left(p_2+\frac{l(t_s)}{2}\right)+ p_1\cdot(p_2+l(t_s)+p_1)}\\
&\geq 1+\frac{2\cdot p_1\cdot \left(p_2- p_2(t_s)\right)}{2\cdot p_2^2+p_1\cdot p_2+ p_1^2+ p_2(t_s)\left(\frac{p_2}{2}+p_1\right)}\\ 
&\geq 1+\frac{p_1\cdot p_2}{\frac{9}{4}\cdot p_2^2+\frac{3}{2}\cdot p_1\cdot p_2+ p_1^2}=1.1392 > 1.1038
\end{align*}
Again we assume $l(t_s)=p_2(t_s)$ for the third inequality. Then the performance ratio increases with decreasing $p_2(t_s)$. Therefore, we consider the maximum value of $p_2(t_s)=\frac{p_2}{2}$ and obtain the result.

If there is no such $t_s$ until time $p_2$ then $p_2/p_2(p_2) < p_1/p_1(p_2)$ holds and the adversary submits many small jobs with processing time $p=\delta$, weight $w= \frac{p_2}{p_2(p_2)} \cdot \delta$, and total length $l$ at time $t=p_2$ and stops submission afterwards. We use the notation $l_2=\frac{l_1}{\sqrt{p_2-p_1}}$. The adversary chooses $l$ in order to maximize the performance ratio $c(p_2)$ and does not submit any further jobs and we obtain 
\begin{align*}
c(p_2) &\geq\frac{p_1\cdot (p_1+p_2-p_2(p_2)) +p_2\cdot(p_1+p_2)+\frac{p_2}{p_2(p_2)}\cdot l \cdot\left(p_1+p_2+\frac{l}{2}\right)}{p^2_2 +\frac{p_2}{p_2(p_2)}\cdot l \cdot\left(p_2+\frac{l}{2}\right)+p_1\cdot (p_2+l+p_1)}\\
&\geq 1+\frac{\frac{p_2 \cdot p_1}{p_2(p_2)} l-p_1\cdot l+p_1\cdot(p_2-p_2(p_2))}{\frac{p_2}{p_2(p_2)}\cdot l\cdot \left(p_2+\frac{l}{2}\right)+p_2^2+p_1\cdot p_2+p_1\cdot l+p_1^2}\\
&\geq 1+\frac{\frac{p_2 \cdot p_1}{p_2(p_2)} l_2-p_1\cdot l_2+p_1\cdot(p_2-p_2(p_2))}{\frac{p_2}{p_2(p_2)}\cdot l_2\cdot \left(p_2+\frac{l_2}{2}\right)+p_2^2+p_1\cdot p_2+p_1\cdot l_2+p_1^2}\\
&\geq 1+\frac{p_2\cdot l_2-p_1\cdot l_2}{\frac{p_2}{p_1}\cdot l_2\cdot \left(p_2+\frac{l_2}{2}\right)+p_2^2+p_1\cdot p_2+p_1\cdot l_2+p_1^2}=1.1038
\end{align*}
\end{proof}	

We graphically depict the adversary strategy of the proof of Theorem~\ref{thm:lb} in the game tree of Fig.~\ref{fig:G_Tree}. Depending on the strategy of the online algorithm, there are three different strategies of the adversary.
		
\begin{figure}[ht]
	\centering
			
			\tikzset{
				treenode/.style  = {shape=rectangle, rounded corners, draw, align=center,
					top color=white,bottom color=blue!20},
				Adversary/.style = {treenode, bottom color=red!30},
				Online/.style    = {treenode, bottom color=blue!20},
				PayOff/.style    = {treenode, bottom color=green!10},
				dummy/.style     = {circle, draw}
			}
			\noindent\makebox[\textwidth]{
				\begin{tikzpicture}
				[
				sibling distance        = 15em,
				level distance          = 6em,
				edge from parent/.style = {draw, -latex, thick},
				every node/.style       = {font=\ttfamily\normalsize}
				]
				\node [Adversary] {Adversary}
				child { node [Online] {Online Algorithm}
					child { node [Adversary] {Adversary}
						child[level distance=9em]{node [PayOff] {$C\geq 1.1038$}
							edge from parent
							node[left]{Submits small}
							node[right]{jobs at $t=1$}
						}
						edge from parent node[above, sloped, align=center] {$\frac{p_2}{p_2(p_1)} \geq \frac{p_1}{p_1(p_1)}$}
					}
					child{ node[Adversary]{Adversary}
						child{node [PayOff] {$C\geq 1.1038$}
							edge from parent node[left]{Submits small}
							node[right]{jobs at $t=t_s$}
						}
						edge from parent 
						node[left] {$\frac{p_2}{p_2(t_s)}=$}
						node[right]{$\frac{p_1}{p_1(t_s)}$}
					}
					child { node [Adversary] {Adversary}
						child[level distance=9em]{node [PayOff] {$C\geq 1.1038$}
							edge from parent
							node[left]{Submits small}
							node[right]{jobs at $t=p_2$}
						}
						edge from parent node[above, sloped, align=center] {$\frac{p_2}{p_2(p_2)}<\frac{p_1}{p_1(p_2)}$}
					}    
					edge from parent 
					node[left] {Submits two jobs}
					node[right] {at time $t=0$}  
				};
				\end{tikzpicture}}
			
	\caption{\label{fig:G_Tree}A game tree of the adversary to obtain a high competitive ratio for online total weighted completion time scheduling on a single machine with preemption.} 
\end{figure}
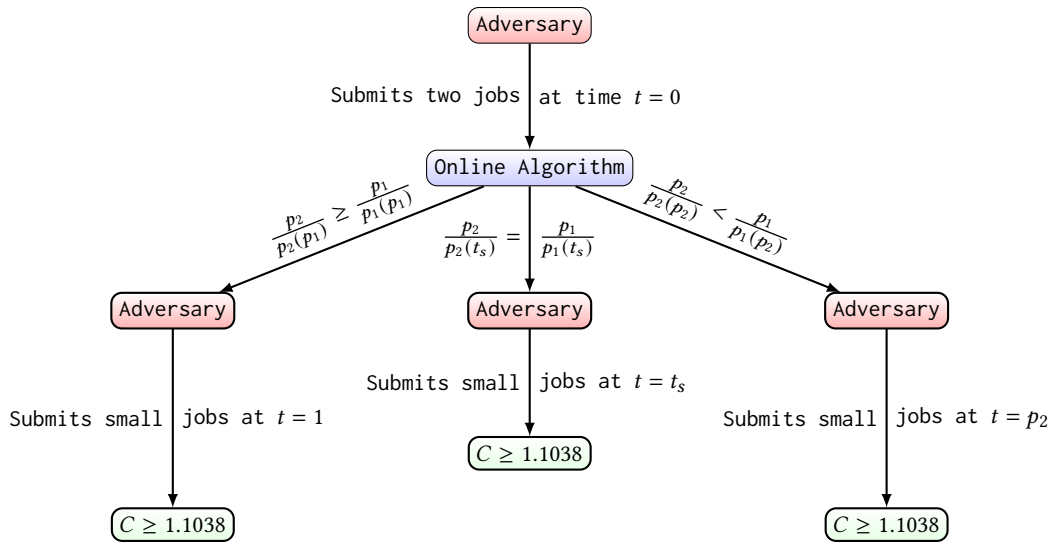  

The example used in the proof of Theorem~\ref{thm:lb} uses two jobs with long processing times and multiple identical jobs with small processing times. Fig.~\ref{fig:LB} shows the performance ratio of the two schedules that complete one long job before starting the other one in dependence of the processing time $p_2$. The general lower bound of the competitive ratio for the problem is $1.1038$ at the point where both curves intersect. In both cases, we select the total length of the small jobs such that we obtain the largest performance ratio. 
		
\begin{figure}[ht]
	\centering
	\includegraphics[width=12cm]{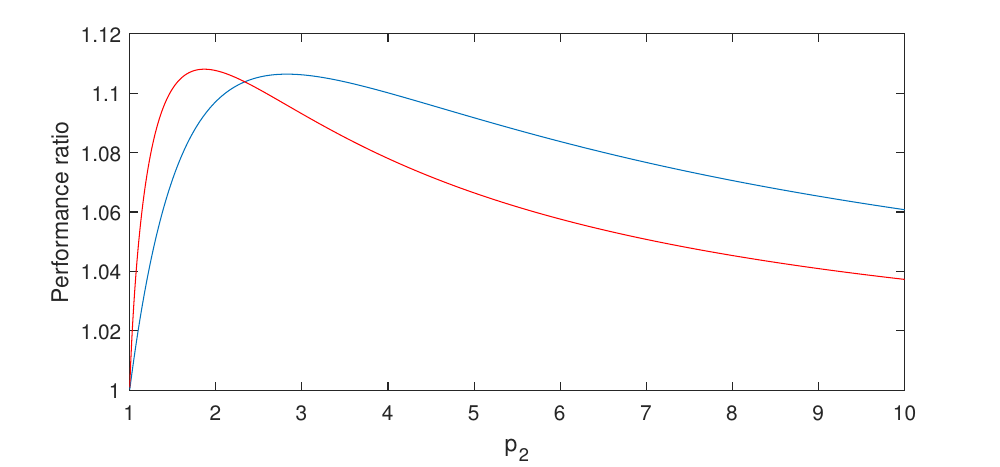}
	\caption{\label{fig:LB}Performance ratios of worst case scenarios over $p_2$ values}
\end{figure}

\newpage

\bibliographystyle{ACM-Reference-Format}
\bibliography{schwiegelshohn}

\end{document}